\pgfplotsset{compat=newest}
\newtheorem{theorem}{Theorem}[section]
\newtheorem{lemma}[theorem]{Lemma}
\newtheorem{corollary}[theorem]{Corollary}
\newtheorem{definition}[theorem]{Definition}
\newcommand{\todo}[1]{{\textcolor{red}{{\bf TODO:} #1}}}
\newcommand{\bY}{\mathbf{Y}}
\newcommand{\bZ}{\mathbf{Z}}
\newcommand{\bv}{\mathbf{v}}
\newcommand{\bV}{\mathbf{V}}
\newcommand{\bVo}{\overline{\mathbf{V}}}
\newcommand{\ve}{\varepsilon}
\newcommand{\bx}{\mathbf{x}}
\newcommand{\xopt}{\overline{\mathbf{x}}}
\newcommand{\alg}{\textsc{\sc ALG}\xspace}
\newcommand{\opt}{\text{\sc OPT}\xspace}
\newcommand{\e}{\varepsilon}
\renewcommand{\d}{\delta}
\renewcommand{\Pr}{\text{\bf Pr}\xspace}
\newcommand{\Ex}{\mathbb{E}\xspace}
\newcommand{\eat}[1]{}
\title{Online Algorithms for the Santa Claus Problem}
\author{%
  MohammadTaghi Hajiaghayi \\
  Department of Computer Science\\
  University of Maryland\\
  College Park, MD \\
  \texttt{hajiagha@umd.edu} \\
  \And
  MohammadReza Khani \\
  Microsoft Bing Ads \\
  Microsoft Corporation \\
  Redmond, WA \\
  \texttt{khani87@gmail.com}
  \And 
  Debmalya Panigrahi \\
  Department of Computer Science \\
  Duke University \\
  Durham, NC \\
  \texttt{debmalya@cs.duke.edu}
  \And
  Max Springer \\
  Department of Mathematics\\
  University of Maryland\\
  College Park, MD \\
  \texttt{mss423@umd.edu} \\
}
\begin{document}
% \raggedbottom
% \flushbottom
\maketitle

% Abstract. Note that this must come before \maketitle.
\begin{abstract}
The Santa Claus problem is a fundamental problem in {\em fair division}: the goal is to partition a set of {\em heterogeneous} items among {\em heterogeneous} agents so as to maximize the minimum value of items received by any agent. In this paper, we study the online version of this problem where the items are not known in advance and have to be assigned to agents as they arrive over time. If the arrival order of items is arbitrary, then no good assignment rule exists in the worst case. However, we show that, if the arrival order is random, then for $n$ agents and any $\e > 0$, we can obtain a competitive ratio of $1-\e$ when the optimal assignment gives value at least $\Omega(\log n / \e^2)$ to every agent (assuming each item has at most unit value). We also show that this result is almost tight: namely, if the optimal solution has value at most $C \ln n / \e$ for some constant $C$, then there is no $(1-\e)$-competitive algorithm even for random arrival order.

\eat{
We consider the classic ``Santa Claus'' allocation problem for indivisible items. More formally, $m$ items arrive one-by-one with inherent valuations and must be irrevocably assigned to one of $n$ agents with an objective of fairness across agents. In this paper, we consider fairness to be \emph{maximizing} the value of the bundle assigned to the \emph{minimally} satisfied agent. This problem has been widely explored in the offline setting, however we provide a pioneering $(1-\ve)$-competitive ratio for both the \textsc{iid} and random-order arrival models using an approximately greedy algorithm for any $\ve > 0$. Most notably, we provide a polynomial-time algorithm which yields a near optimal allocation subject to a novel lower bound of $\Omega \left(\frac{\log n}{\ve} \right)$ on the offline optimal solution in minimal computation time. In addition to the tight competitive ratio and matching lower bound, we provide multiple lower bounding results for the adversarial context. These results provide concrete conditions under which a nearly optimal result can be achieved for the open ended problem.
\vspace{5mm} \\
\todo{Add adversarial competitive ratio result}
}
\end{abstract}

% -------------------------------------
\section{Introduction} \label{sec.intro} Fair allocation of resources is one of the central themes of algorithmic fairness and game theory. In fact, the theory of fair division has its roots in mathematics going back to as early as 1948~\cite{steihaus1948problem}. In the general setting, this problem comprises a set of items that must be divided among a set of agents in an egalitarian manner, where each agent has a (possibly non-uniform) valuation for each item. A natural objective to capture the goal of fair division is to maximize the minimum total value of items received by any agent. This gives rise to the famous ``Santa Claus problem'' that we describe below.

In the Santa Claus problem, originally described by Bansal and Sviridenko in 2006 \cite{Bansal_2006} (although it was studied under different names or assumptions prior to this), the Santa Claus is said to have a set of $m$ presents to be distributed equitably among $n$ children. Each child $i\in [n]$ has some arbitrary non-negative value $v_{ij}$ for present $j\in [m]$. Santa's goal is to distribute the presents in a way that makes the least satisfied child maximally satisfied. More formally, this means that the assignment seeks to maximize the minimum total value of the presents received by any child, where the total value of presents received by a child is the sum of her values for the presents that she received. The Santa Claus problem can be formalized as the following integer program:
  \[ \max\bigg\{ \min_{i\in [n]} \sum_{j=1}^m v_{ij} x_{ij}
  \mid \sum_{i=1}^n x_{ij} \le 1 \;\forall j \in [m], \; x \in \{0, 1\}^{mn}
  \bigg\}. \]
\eat{

\begin{gather*}
\max \lambda \\
\begin{aligned}
\textup{s.t.}\quad &\forall i \in [n] &\quad \sum_{j=1}^m v_{ij} \cdot x_{ij}  \geq  \lambda \\
            &\forall j \in [m] &\quad \sum_{i=1}^n x_{ij} \le 1\\
            &\forall i\in [n], j\in [m] &\quad x_{ij} \in \{0, 1\}
\end{aligned}
\end{gather*}
}

There is substantial literature going back more than 50 years that studies variants of this problem in the offline setting (see related work).
In many practical situations, however, the set of items to be allocated is not known in advance. For example, in online advertising, ad-space providers will receive monetary bids from competing agents for the display of their advertisements in real-time for an available space on a webpage \cite{Buchbinder_2007,Blum_2006,Hajiaghayi_2007,Zhou_2008}. The provider must then make irrevocable decisions as to which advertiser's bid to accept based only on knowledge of prior allocations and the current bids for the available space \cite{balseiro2019learning,conitzer2021multiplicative}. Beyond advertising, online allocation procedures have been useful in the study of donation distribution, wireless charging networks, organ donor matching, etc (see \cite{aleksandrov2020online} for a survey of these applications). 

\begin{wrapfigure}{R}{0.5\textwidth}
    \centering
	\includegraphics[width=0.38\textwidth]{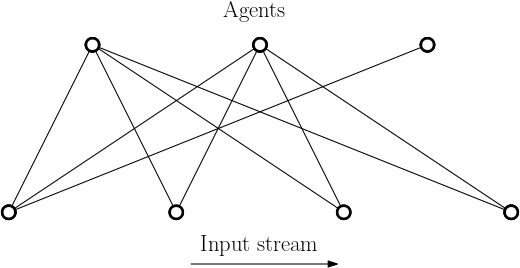}
	\caption{Example online problem instance. Fixed agents are presented in the top row and the arriving items are arranged from left to right in correspondence with their arrival order. An edge indicates a nonzero item valuation for a given agent.} %\todo{enlarge font}
	\label{fig:hard_ex}
\end{wrapfigure}

Motivated by these applications, we consider the {\em online} Santa Claus problem in this paper. In this setting, the items arrive in an online sequence and must be allocated to one of the agents immediately upon arrival. As in the offline problem, our goal is to maximize the minimum total value among all agents. %While the problem may seem like a trivial extension, in practice it increases the complexity tremendously. 
To illustrate the problem, consider the simple example in Figure \ref{fig:hard_ex} on the right where edges represent unit value of an agent for an item. At the time of the first arriving (leftmost) item, all agents can be matched to this item and therefore with probability $\frac{1}{3}$ any agent will receive it. However, as we continue forward with the input stream, we see in retrospect that the only nonzero max-min solution corresponds to the case where the first item was allocated to the rightmost agent (as it is the only item for which they have a nonzero value). %In the offline instance, it is very easy to see this solution, but the incremental reveal of information requires both sophisticated algorithms and assumptions on the input instance that diminish the possibility of such complexity.

Following standard terminology for online algorithms (see, e.g., \cite{borodin2005online}), we define the {\em competitive ratio} of an online algorithm as the minimum ratio between the value of the (maximization) objective in the algorithm's solution to that in the optimal (offline) solution in hindsight. We furthermore discuss the \emph{additive regret} as the additive loss factor of our algorithm. More formally, we say our algorithm, \alg, has competitive ratio $c$ and additive regret $b$ if $\alg \geq c \cdot \opt{} - b$.

Prior work on the max-min objective in the online setting required various relaxations of the problem, such as allowing for some reordering in the allocation process \cite{Epstein_2011}, restricting the number of agents \cite{he2005optimal,tan2006semi,wu2014optimal}, or allowing migration of items after assignment \cite{chen2011line}. This is because of two reasons. First, even in the offline setting, there remains a significant gap between the best upper and lower bounds on the approximation ratio of the Santa Claus problem, and bridging this gap is a major open problem. Second, as we will soon see, there is a simple construction for the online problem that shows the competitive ratio cannot be better than $n$. To bypass these bottlenecks, our first assumption in this paper is that the items arrive in {\em random order}. This is a standard assumption that has been used to simplify many related online problems \cite{babaioff2007knapsack,devanur2009adwords,feldman2009online,feldman2009online2,goel2008online,karande2011online,kleinberg2005multiple}. But, even with this assumption, we show that obtaining a competitive solution is impossible in general {\em for small problem instances}. This motivates our second assumption: that the objective value of the optimal solution is sufficiently large (with respect to the values of individual items). %In fact, we obtain the precise threshold on the optimal objective in the random order setting. 
With these two assumptions, we give an algorithm that obtains a competitive ratio of $(1-\e)$ for any $\e > 0$. We note that using standard techniques, the assumption about the optimal objective being sufficiently large can be replaced by a corresponding additive regret in the competitive ratio.

We now formally define the two online input models that we consider in this paper: \emph{adversarial} and \emph{random order} input.
\vspace{-1mm}
\begin{definition}[Adversarial Input]
An adversary selects the value vector $v \in [0,1]^n$ of each arriving item for all the agents, as well as the order in which these vectors arrive.
\end{definition}
\vspace{-1mm}
\begin{definition}[Random-Order Input]
An adversary selects the value vector $v \in [0,1]^n$ of each arriving item for all the agents, but these vectors are randomly permuted to determine their arrival order.
\end{definition}
Note that in the literature, the independent and identically distributed (i.i.d.) input model is also often studied for related problems \cite{goel2008online,karp1990optimal,mahdian2011online,karande2011online,Agrawal_2014,kesselheim2014primal,seiden2002online,molinaro2014geometry}. In this model, the adversary picks a distribution over inputs that is unknown to the algorithm and arriving items are sampled i.i.d. from this distribution. The random order model is {\em stronger} than the i.i.d. model in the sense that any algorithmic result for the random order model automatically extends to the i.i.d. setting as well. This includes the algorithmic results that we obtain in this paper for the random order arrival model. %As a result, any algorithm that works for the random order input will also work on an i.i.d. input. We thus assume these results, but do not formally consider them in the main text. 

\subsection{Problem Definition}
We here introduce the notation that we will use in the rest of the paper. Let $\bv^1, ..., \bv^m \in [0,1]^n$ denote the input sequence of items arriving in random order where $v^t_i$ is the value of the $t$-th item to the $i$-th agent. We additionally denote by $\bx^1, ... , \bx^m \in [0,1]^n$ the fractional allocation of each item by the algorithm.
We further let the corresponding allocations of a fixed (offline) optimal solution be denoted as $\xopt^1, ..., \xopt^m \in [0,1]^n$ and let $\opt{}$ denote the max-min objective value of the optimal solution. 
For simplicity, we slightly abuse notation by letting $\bV^t = (v^t_1 x^t_1, \dots, v^t_n x^t_n)$ and $\bVo^t = (v^t_1 \overline{x}^t_1, \dots, v^t_n \overline{x}^t_n)$ for $t \in [m]$. Our algorithm thus seeks to maximize the coordinate-wise minimum of $\sum_{t=1}^m \bV^t$.

\subsection{Our Contributions} \label{sec.our_work}
First, we give a simple construction to show that if the arrival order of the items is adversarial, then the best competitive ratio that can be achieved is only $1/n$. (In fact, we also match this competitive ratio using a simple algorithm in the supplementary material.)
\begin{theorem}[Adversarial Input] \label{thm:ad-upper}
    In the adversarial setting, no algorithm can obtain a competitive ratio better than $1/n$.
\end{theorem}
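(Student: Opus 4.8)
The plan is to construct a single adversarial instance on $n$ agents for which $\opt = 1$ while every online algorithm, deterministic or randomized, attains objective value at most $1/n$. The instance has a simple two-stage shape. In stage one, a single \emph{flexible} item arrives with value $1$ for all $n$ agents; the algorithm must irrevocably commit to an allocation $\bx^1$ of it, where $\sum_{i\in[n]} x^1_i \le 1$. In stage two, the adversary designates one ``special'' agent $a$ and sends $n-1$ further items, one privately dedicated to each agent $b \ne a$ (value $1$ for $b$, value $0$ for every other agent). The key structural observation is that the special agent $a$ never sees a positive value after stage one, whereas every other agent is fully served by its private item without ever needing the flexible item; hence the offline optimum places the flexible item entirely on $a$, gives each other agent its private item, and every agent ends with value exactly $1$, so $\opt = 1$ (and it cannot exceed $1$ since agent $a$ values nothing but the flexible item).

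I would first handle a deterministic algorithm. Having observed $\bx^1$ after stage one, the adversary sets the special agent to be $a \in \arg\min_{i} x^1_i$, so that $x^1_a \le \tfrac1n \sum_i x^1_i \le \tfrac1n$. Since no stage-two item has positive value for $a$, the total value agent $a$ accrues over the whole stream is exactly $x^1_a \le 1/n$, so the algorithm's min-objective is at most $\tfrac1n = \tfrac1n\,\opt$. This is the heart of the argument.

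For randomized algorithms I would pass to Yao's minimax principle: keep the same flexible-item-first construction but now let the special agent $a$ be chosen uniformly at random from $[n]$ before the stream starts (with the private items determined accordingly). When the flexible item arrives, the algorithm has observed only the all-ones value vector, which carries no information about $a$, so its allocation $\bx^1$ is independent of $a$. As above the algorithm's objective is at most $x^1_a$, and therefore
\[
\Ex[\alg] \;\le\; \Ex[x^1_a] \;=\; \tfrac1n\, \Ex\Big[\sum_{i\in[n]} x^1_i\Big] \;\le\; \tfrac1n \;=\; \tfrac1n\,\opt ,
\]
where the expectation is over both $a$ and the algorithm's internal coins. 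Averaging over $a$ then produces a fixed instance on which the algorithm's expected value is at most $\tfrac1n\,\opt$, so no algorithm can achieve competitive ratio better than $1/n$.

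The one subtlety to get right is the calibration of the construction: the adversary must penalize the algorithm's commitment on the flexible item without ``over-killing'' the special agent, because $\opt$ must remain $1$. This is precisely why the special agent is made the \emph{unique} possible beneficiary of the flexible item while every other agent has a private fallback — the algorithm is then forced to squander a $(1-\tfrac1n)$-fraction of the only resource agent $a$ could ever use. The remaining points are routine: checking that the private items cap $\opt$ at $1$, and noting that the argument is indifferent to whether the algorithm or the optimum allocates integrally or fractionally, since only the scalar $x^1_a$ enters the bound.
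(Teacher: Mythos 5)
Your construction is the $k=1$ instance of the paper's: a block of ``public'' items valued by everyone followed by ``private'' items revealing which agent had no private fallback, so that any algorithm must spread the public mass across all $n$ indistinguishable candidates. The paper uses $k$ copies of each item (reusing the construction for Theorem~\ref{thm:lower}) and states only informally that ``the public agent gets no more than $k/n$ items in expectation''; your version makes this precise, handling deterministic algorithms via an adaptive choice of $a\in\arg\min_i x^1_i$ and randomized algorithms via Yao's minimax principle with $a$ drawn uniformly. Same approach, but your write-up closes the gap the paper leaves implicit for randomized algorithms.
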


This motivates us to consider the \emph{random order model}, where an adversary again selects the set of items but they are then presented in random permutation order. For this setting, we give an algorithm that obtains a {\em fractional} assignment that is nearly optimal: %This result is achieved by a smoothing of the minimum function, which we will allocate with respect to, as well as a ``restart'' on the allocation values halfway through the input stream to alleviate the correlations that arise between items. Crucially, we also demonstrate a necessary dependence on the input instance size to allow for a $(1-\ve)$-competitive ratio. This result is stated precisely as follows:

\begin{theorem}[Random Order: Algorithm] \label{thm:main_ROM}
    For any $\ve > 0$, there is an online fractional algorithm for the Santa Claus problem that has a competitive ratio of $(1 - \ve)$ in the random order input model under the assumption that $\opt{} \geq \Omega \left( \log n / \ve^2 \right)$.
\end{theorem}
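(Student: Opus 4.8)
The plan is to decouple the argument into (i) the design of a fractional allocation rule together with a lower bound on the \emph{expected} value every agent accumulates, and (ii) a concentration argument showing that each agent's realized value is within an additive $\e\,\opt$ of its expectation. Step (ii) is where the hypothesis $\opt = \Omega(\log n/\e^2)$ enters: a single agent's accumulated value is a sum of $[0,1]$-bounded contributions over a uniformly random sub-sequence of items, so a Chernoff/Azuma bound gives deviation at most $\e\,\opt$ with failure probability $\exp(-\Omega(\e^2\opt))$, and a union bound over the $n$ agents is absorbed exactly when $\e^2\opt = \Omega(\log n)$.

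For the algorithm, I would process the random stream in geometrically growing phases: a throwaway prefix consisting of the first $\e m$ items (allocated arbitrarily, costing at most $\e\,\opt$ in the objective), and then phases $\ell = 1,2,\dots,O(\log(1/\e))$ where phase $\ell$ handles the items that arrive in the window $(\e 2^{\ell-1}m,\ \e 2^{\ell}m]$. At the start of phase $\ell$ the algorithm has seen a uniformly random subset $S_\ell$ of roughly $\e 2^{\ell-1}m$ items; it solves the offline fractional Santa Claus LP restricted to $S_\ell$, takes an optimal \emph{dual} solution $p^{(\ell)}\in\mathbb{R}^n_{\ge 0}$ (perturbed to make the argmax below unique), and during phase $\ell$ routes each arriving item $t$ entirely to the agent maximizing $p^{(\ell)}_i v^t_i$. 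Since $\xopt$ restricted to $S_\ell$ is feasible for that LP and, after per-agent concentration on the sample, gives every agent at least $(1-O(\sqrt{\log n/(\e 2^{\ell}\opt)}))$ times its proportional share of $\opt$, LP duality and complementary slackness imply that the price-greedy rule applied to all of $S_\ell$ already hands each agent at least the LP value on $S_\ell$.

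The core lemma is then a sampling/robustness statement: conditioned on the history (hence on $p^{(\ell)}$), the items of phase $\ell$ are a fresh uniformly random subset of what remains, so the same price-greedy rule gives each agent, in expectation over the phase, at least a $(1-O(\sqrt{\log n/(\e 2^{\ell}\opt)}))$ fraction of the value it "deserves" from that phase (its length-proportional share of $\opt$). Combining this with the concentration of step (i)–(ii) within each phase, the per-phase shortfall of any agent is $O(\sqrt{\e 2^{\ell}\,\opt\,\log n})$, and summing over $\ell$ gives a geometric series dominated by its last term $O(\sqrt{\opt\,\log n})$, which is at most $\e\,\opt$ precisely under the hypothesis on $\opt$. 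Adding back the $\e\,\opt$ lost on the throwaway prefix and rescaling $\e$ by a constant yields $\alg \ge (1-\e)\,\opt$ with high probability, as claimed.

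The main obstacle is the robustness lemma: prices learned from a small random sample need not estimate $\opt$ accurately — especially in the first few phases — yet must still induce an allocation on the \emph{unseen} items whose per-agent value is close to the right proportion of $\opt$. The geometric schedule is what makes this work: early phases have the poorest price estimates but govern only a tiny fraction of the total value, and the accumulated error telescopes into a single worst (last) phase. Getting this trade-off to land exactly at $\Omega(\log n/\e^2)$, rather than at a worse power of $1/\e$, is the delicate point of the analysis, and it must be done carefully enough that the conditioning on the random history does not break the within-phase concentration bound.
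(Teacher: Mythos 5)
Your approach is genuinely different from the paper's. The paper uses a \emph{smoothed greedy with restart}: it replaces the hard minimum by the concave surrogate $\phi_\ve(u) = -\tfrac{1}{\ve}\ln\sum_i e^{-\ve u_i}$, allocates each item to (approximately) maximize the increase in $\phi_\ve$, restarts the potential once at $t=m/2$ to decorrelate the two halves, and then combines a gradient-stability lemma with a sampling-without-replacement inequality to get $(1-\ve)\opt - O(\log n/\ve)$ in expectation. Your proposal is instead a phase-based ``learn-and-price'' primal--dual scheme in the style of Agrawal--Wang--Ye/Kesselheim et al.\ for online packing LPs, with geometric phase lengths and an LP solved on the observed prefix at the start of each phase.

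There is, however, a real gap in the central step. You claim that, taking optimal dual prices $p^{(\ell)}$ for the sample LP and routing each item entirely to $\arg\max_i p^{(\ell)}_i v^t_i$ (after perturbation to break ties), ``LP duality and complementary slackness imply that the price-greedy rule applied to all of $S_\ell$ already hands each agent at least the LP value on $S_\ell$.'' This is false. Complementary slackness only says the optimal primal \emph{restricts} support to arcs with $q_j = p_i v_{ij}$; it does not say every support-consistent allocation attains the optimum, and for a max--min LP the balance among tied agents is exactly what the dual prices fail to encode. Concretely: with two agents, items valued $(1,2)$, and $m$ items, the sample LP has $\lambda = 2m/3$, optimal prices $p=(2/3,1/3)$, and every item is a tie $p_1 v_1 = p_2 v_2 = 2/3$. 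Any deterministic perturbation sends all items to one agent (the other gets $0$), and uniform random tie-breaking gives agent $1$ only $m/2 < (3/4)\lambda$ in expectation. No Chernoff/Azuma bound can rescue this: the shortfall is a constant fraction of $\opt$, not a $(1\pm\ve)$ deviation, so the per-phase error is not $O(\sqrt{\e 2^\ell \opt \log n})$ and the geometric telescoping never starts. To make a phase-based argument work you would need to carry forward the \emph{fractional primal} split (not just the dual prices), and transporting that split to unseen items is a nontrivial matching problem that the proposal does not address. The paper sidesteps this entirely: the $\phi_\ve$-greedy step is itself a (fractional) balancing rule whose gradient already encodes how to split a tied item, which is why the smoothed-potential route succeeds here while a pure price-threshold rule does not.
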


We further show that, through randomized rounding, we can give an \emph{integral} allocation that retains the near optimality of this fractional allocation.

% \alert{Can't we say something about integral allocations by using randomized rounding since $\opt$ is large?}

%Our complimentary result comes in the form of an impossibility result for algorithms in the random order input model. 
Finally, we show that the lower bound on the value of $\opt$ in the above theorem is {\em necessary}:

\begin{theorem}[Random Order: Impossibility Result] \label{thm:main_lower}
    For any $\ve \in (0, 1)$, there is no online algorithm for the Santa Claus problem in the random order input model that has a competitive ratio of $(1-\ve)$ when $\opt < C\cdot \frac{\ln n}{\ve}$ for some (absolute) constant $C > 0$.
\end{theorem}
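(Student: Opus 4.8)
The plan is to construct, for each large $n$, a single adversarial multiset of items (which is then permuted uniformly at random) on which \emph{every} online algorithm $\alg$ — even one that is fractional, randomized, and knows $n$ and the total number of items — satisfies $\Ex[\min_i V_i] < (1-\e)\cdot\opt$, where $V_i$ is the value $\alg$ ultimately gives agent $i$ and $\opt = \Theta(\ln n/\e)$. Since the competitive ratio in the random‑order model is $\Ex[\min_i V_i]/\opt$ in the worst case, this proves the theorem, and since the construction uses only $0/1$ item values it applies in particular to integral assignments.

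\textbf{Construction.} Set $k=\lfloor c\,\ln n/\e\rfloor$ for a small absolute constant $c>0$, so the target is $\opt=k$. The instance mixes \emph{public} items (worth $1$ to every agent) with \emph{private} batches (worth $1$ to a single agent), and designates a uniformly random subset $H\subseteq[n]$ of ``self‑sufficient'' agents that carry private items; the agents in $[n]\setminus H$ must be served almost entirely from the public pool. The private‑batch sizes and the public‑pool size are chosen simultaneously so that: (i) the offline optimum routes every private item to its owner and divides the public pool so that \emph{every} agent ends at exactly $k$, hence $\opt=k$; (ii) the public pool is only just large enough for this, so mis‑spending an $\Omega(1)$ fraction of it on the wrong agents is irrecoverable; and (iii) the number of private items carried by a self‑sufficient agent is small enough that a uniformly random $\Theta(1)$‑fraction prefix of the stream contains almost none of them, so an algorithm learns essentially nothing about $H$ until a constant fraction of the items — in particular a constant fraction of the public pool — has already been irrevocably assigned.

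\textbf{Analysis.} Fix any $\alg$. The instance is invariant under relabelling the agents, so we may assume $\alg$ is symmetric; then the $V_i$ are identically distributed with $\Ex[V_i]=\bar V\le k$. The core step shows $\alg$ cannot make the $V_i$ concentrate near $\bar V$: a constant fraction of the public pool must be committed while $H$ is still statistically invisible, so those items are allocated essentially ``blindly'' among the currently indistinguishable agents, and the later (informed) allocations cannot, because the pool is tight, fully undo an early blind mis‑spend. Quantifying this yields, for each agent $i$, $\Pr[V_i \le \bar V - \Omega(\ln n)] \ge 1/n$, with these events essentially negatively associated across agents. A union bound / second‑moment argument then gives $\Pr[\min_i V_i \le k - \Omega(\ln n)]$ bounded below by an absolute constant; choosing $c$ small enough (relative to the hidden constants) makes $\Omega(\ln n)$ dominate $\e k = c\ln n$, so $\Ex[\min_i V_i] < (1-\e)k = (1-\e)\opt$ and $\alg$ is not $(1-\e)$‑competitive.

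\textbf{Main obstacle.} The delicate part is the construction, specifically making the unavoidable deficit as large as $\Omega(\ln n)=\Theta(\e k)$ rather than $O(1)$ while keeping $\opt=\Theta(\ln n/\e)$. There is a genuine tension: giving each self‑sufficient agent more private items makes the optimum's asymmetry — and hence the cost of getting the public split wrong — larger, but also makes $H$ easier to detect from a short prefix, so fewer public items are assigned blindly; fewer private items does the reverse. One must also certify that the loss is truly irrecoverable, not merely ``balanced out'' by later informed allocations, which is why the public pool must be kept tight. The value $\ln n/\e$ is precisely where this balance is tight: it is the largest $\opt$ for which the $\Theta(\ln n)$ per‑agent fluctuation, inflated by the $\sqrt{\ln n}$ extreme‑value factor from the union over $n$ agents, still exceeds $\e\cdot\opt$ — pushing to $\ln n/\e^2$, where the algorithm of Theorem~\ref{thm:main_ROM} takes over, would require a larger irreducible fluctuation than any such construction can force. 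Finally, the correlations introduced by all agents sharing one random permutation must be controlled, which is the job of the negative‑association / second‑moment step.
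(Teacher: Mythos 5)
Your high-level picture is the right one — a public/private instance, public items that are worth $1$ to everyone, private items worth $1$ only to their owner, a tight public pool, and the observation that in a random permutation a short prefix may contain no evidence of which agents are privately served, forcing a constant fraction of the public pool to be mis-spent. The scale $\opt=\Theta(\ln n/\e)$ and the required $\Theta(\e k)=\Theta(\ln n)$ deficit are also correct. But the proposal as written has several genuine gaps, and the paper's actual argument routes around all of them with a simpler construction.

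The most important difference is that you designate a \emph{random subset} $H$ of self-sufficient agents and then try to prove a per-agent lower-deviation bound $\Pr[V_i \le \bar V - \Omega(\ln n)]\ge 1/n$ followed by a negative-association / second-moment argument to turn ``some agent is shortchanged with probability $\ge 1/n$'' into ``$\min_i V_i$ is small with constant probability.'' None of these three steps is established: the construction parameters (size of $H$, private batch sizes, public pool size) are never fixed; the $\Omega(\ln n)$ per-agent deficit is asserted but not derived from any concrete mechanism; and negative association of the $\{V_i\}$ under an arbitrary adaptive online algorithm is a strong and non-obvious property to claim. By contrast, the paper uses a \emph{deterministic} instance with exactly one public agent and $n-1$ private agents, each owning $k=\Theta(\ln n/\e)$ private items, plus $k$ public items. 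The analysis then needs no per-agent tail bound, no union over agents on deficits, and no negative-association machinery: it shows (i) with constant probability the $\e$-prefix contains $\Omega(\e k)$ public items (a direct sampling-without-replacement concentration bound, Lemma~\ref{lem:dh}), and (ii) with constant probability some private agent's items are \emph{entirely} absent from the prefix. Conditioned on (i)+(ii), that private agent is information-theoretically indistinguishable from the public agent, so by exchangeability of the algorithm the public agent receives at most half of the prefix's public items in expectation, losing $\Omega(\e k)$ value. This pins the public agent's deficit directly rather than arguing about fluctuations of a symmetric ensemble, and the $\gamma\ge\Omega(\e)$ and $\opt\ge C\ln n/\gamma$ conclusions follow immediately.

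A concrete issue with your route even if you filled in the construction: because your $H$ is random and each self-sufficient agent carries its own private items, the algorithm will eventually learn $H$ and can \emph{stop} feeding public items to agents in $H$; to argue the early mis-spend is irrecoverable you would need the public pool to be exactly $|\,[n]\setminus H\,|\cdot k$, at which point the construction essentially collapses to the paper's single-public-agent case (or a disjoint union of copies of it). The fluctuation/second-moment framing you describe is, as far as I can see, not a genuine alternative path but a harder reformulation of the same instance. I would recommend fixing the instance to the paper's explicit one and replacing the concentration-on-$V_i$ step with the indistinguishability/symmetry argument, which is both shorter and avoids the unproven claims.
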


The reader will note that the lower bound on $\opt$ is precise as a function of $n$, but there is a slight mismatch between the upper and lower bounds as a function of $\ve$ -- bridging this gap is an interesting open question.

We summarize our results for the {\em online} Santa Claus problem in Table~\ref{tab:one}.

% Please add the following required packages to your document preamble:
% \usepackage{graphicx}
\begin{table}[t] \label{tab:one}
\centering
\caption{Online Results for the Santa Claus Problem}
{
% \resizebox{\textwidth}{!}{%
\begin{tabular}{ccc}
\multicolumn{3}{c}{\textbf{Our Results}}                               \\ \hline
\textit{Input Model} & \textit{Algorithm} & \textit{Competitive Ratio} \\ \hline
Adversarial  & \textsc{Random}  & $\left( \frac{1-\varepsilon}{n} \right)\textsc{Opt} - O \left( \frac{n \log n}{\varepsilon^3} \right)$ \\ \hline
Random Order & \textsc{GreedyWR} & $(1-\varepsilon)\textsc{Opt} - O\left( \frac{\log n}{\varepsilon} \right)$                 
\end{tabular}%
}
\end{table}

\eat{
\begin{table}[b]
	\caption{Offline and Online Results for the Santa Claus Problem}
	\label{tab:one}
	\begin{minipage}{\columnwidth}
		\begin{center}
		    \begin{tabular}{ccc}
		    % Centered offline break
		    & \textbf{Offline Results} & \\
			\toprule
			Paper & Algorithm & Approximation Ratio \\
				\toprule
				Bansal and Sviridenko \cite{Bansal_2006} & Configuration LP  & $\Omega \left( \frac{\log \log m}{\log \log \log m} \right)$\\
				Asadpour and Saberi \cite{Asadpour_2010} & Tree Matching  & $\Omega\left( \frac{1}{\sqrt{n} \log^3 n} \right)$\\
				Chakrabarty et al. \cite{Chakrabarty_2009} & Configuration LP  & $\tilde{O} \left( n^\ve \right)$, $\ve = \Omega \left(\frac{\log \log n}{\log n}\right)$\\
				Asadpour et al. \cite{Asadpour_2012} & HyperGraph Matching  & $\frac{1}{4}$\\[1.5mm]
			\toprule
			% Centered Online Break
			& \textbf{Online Results} (This Paper) & \\
%			& \small\small{(This Paper)} & \\
			\bottomrule
			Input Model & Algorithm & Competitive Ratio \\
				\toprule
				Adversarial Input & \textsc{Random}  & $\left(\frac{1-\ve}{n}\right)\opt{} - O\left(\frac{n \log n}{\ve^2}\right)$\\
				Random Order & \textsc{SmoothGreedyWithRestart} & $(1-\ve) \opt - O \left(\frac{\log n}{\ve}\right)$ % \\
				% \textsc{IID} & \textsc{SmoothGreedyWithRestart} & $(1-\ve) \opt - O\left(\frac{\log n}{\ve}\right)$\\
				% \bottomrule
			\end{tabular}
		\end{center}
	\end{minipage}
\end{table}%
}

% With the advent of internet advertising and online sales, these allocation problems have become pivotal in the revenue maximization and fair distribution of virtual space for major companies like Amazon, Google and Facebook to name a small fraction. As such, we contextualize the assignment problem within online  advertising systems. In this domain, impressions arrive over time (in an online manner) and interested advertisers (bidders) submit bids based on how they value each impression. While this problem has been studied extensively for the purpose of revenue maximization \cite{Buchbinder_2007,Blum_2006,Hajiaghayi_2007,Zhou_2008}, we here consider efficient algorithms that retain a certain degree of \emph{fairness} to the competitive advertisers.

% -------------------------------------

\eat{
{\bf Outline of the Paper.} The remainder of the paper is organized as follows: in Section \ref{sec.related} we discuss the literature on related problems and how the present work compares, in Section \ref{sec.algo} we present a $(1 - \ve)$-competitive algorithm for the random order model and in Section \ref{sec.bounds} we compliment these results with novel impossibility results on the Santa Claus problem. We lastly provide concluding remarks in Section \ref{sec.con}.
}

\subsection{Related Work}\label{sec.related}
The general case of the Santa Claus problem was initially explored (under a different name) in the field of algorithmic game theory for the fair allocation of goods \cite{lipton2004approximately}. By studying the assignment LP of \cite{lenstra1990approximation} for the dual ``makespan'' problem, Bezakova and Dani \cite{bezakova2005allocating} derived an additive approximation of $\max_{ij} v_{ij}$, i.e., the objective in the algorithm's solution is at least $\opt - \max_{ij} v_{ij}$ where \opt{} is the objective value of the optimal solution. They also extended the hardness result on the dual makespan minimization problem~\cite{lenstra1990approximation} to demonstrate that the Santa Claus problem is NP-hard and cannot be approximated to a factor better than 2. Later, Bansal and Sviridenko~\cite{Bansal_2006} demonstrated that the integrality gap of the configuration LP for this problem is $\Omega(\sqrt{n})$, while Asadpour and Saberi~\cite{Asadpour_2010} complimented this result with a $O(\sqrt{m} \log^3 m)$ upper bound for the same LP relaxation. To date, the best algorithmic result for the Santa Claus is an $\tilde{O}(n^\ve)$-approximate algorithm, where $\ve = \Omega(\log \log n / \log n)$, in quasi-polynomial time obtained by Chakrabarty, Chuzhoy, and Khanna~\cite{ChakrabartyCK09}.

For the special case of \emph{restricted assignment}, i.e. $v_{ij} \in \{0,v_j\}$, Bansal and Sviridenko \cite{Bansal_2006} provided an $\Omega \left(\log \log \log m / \log \log m \right)$-approximate algorithm that relies on rounding a configuration LP. Later, Feige gave a non-constructive proof that this LP relaxation was within a constant factor of $\opt{}$~\cite{feige2008allocations}. Asadpour, Feige, and Saberi~\cite{Asadpour_2012} made this constructive, obtaining a $1/4$-approximation via a rounding algorithm based on local search, but the algorithm is not known to converge in polynomial time. Further work has since improved this constant factor \cite{jansen2018note,cheng2018restricted,davies2019tale,annamalai2016combinatorial}, improved upon the running time \cite{cheng2019restricted}, and extended the setting beyond additive valuations \cite{bamas2020submodular}. 

{\bf Online Assignment.} The study of online assignment is expansive, but much of the classical work is for adversarial arrival order. Even in the random order setting, a broad range of problems have been considered in recent years including the secretary problem \cite{babaioff2007knapsack,kleinberg2005multiple}, AdWords \cite{devanur2009adwords,feldman2009online,goel2008online}, online matching \cite{feldman2009online2,karande2011online}, online packing \cite{Gupta_2014,feldman2010online,kesselheim2014primal}, online scheduling \cite{Molinaro_2017,Liu_2021}, etc. One example of max-min online assignment in the random order setting is the work of Gollapudi and Panigrahi~\cite{Gollapudi_2014} who considered revenue maximization with fairness objectives. %The work presents a near optimal $(1-\ve)$-competitive algorithm for the i.i.d. input setting but, despite this positive result, the authors do not extend their results to the more complex and practical \emph{random order model}, a novel result presented in our paper. Furthermore, while the authors make the assumption that the optimal offline solution is lower bounded as $\Omega \left(\log n / \ve \right)$ they do not provide a proof on the necessity of such an assumption, a point that we make clear in the present work.
Another related work is that of Molinaro~\cite{Molinaro_2017} for the dual min-max objective, who builds on prior work leveraging the experts framework from online learning \cite{Gupta_2014} to give algorithms that simultaneously perform well in the adversarial and random-order settings. %, balancing the hardness of the two distinct input streams to achieve a ``best of both worlds'' result. In achieving this, the authors relax the maximum function to a general $L^p$ norm and analyze the problem as it depends on the parameter $p$. 
%We here extend their results by emulating the proof technique for the \emph{concave} objective function of a smoothed minimum. Further work by \cite{Liu_2021} demonstrated that the greedy allocation procedure is necessarily optimal for the makespan problem without giving the exact competitive ratio of this algorithm. As we will demonstrate in our competitive ratio analysis, our results are effectively the dual compliment to the prior work. While in general minimization and maximization problem solutions do not align, we demonstrate that in this problem instance the two align well. 
A third line of work relevant to our paper is that of online packing problems in the random arrival order (e.g., \cite{feldman2010online,Agrawal_2014,kesselheim2014primal,seiden2002online,molinaro2014geometry}. In particular, the results of Agrawal~{\em et al.}~\cite{Agrawal_2014} have a similar flavor to ours: they obtain $(1-\ve)$-competitiveness assuming a large enough optimal value for the online packing problem in the random order setting. %This result is derived by a careful worst-case example construction, a method that we closely follow in the construction of our impossibility results for the current problem. %, though we only demonstrate the looser $\Omega \left( \frac{\log n}{\ve} \right)$ bound.

%\noindent \textbf{Matching.} Online bipartite matching was originally presented by Karp et al. \cite{karp1990optimal} and involves a bipartite graph $G(U,V,E)$ where one side is known and the other side arrives in an online fashion. As the vertices arrive, its neighbors are revealed and the vertex must be matched to one of the available neighbors (if they exist). The objective is to maximize such a matching, and as a result the problem can effectively be seen as a sub-problem on the Santa Claus problem with unit value items and a capacity on each agent. The problem has been thoroughly studied in both the adversarial \cite{aggarwal2011online,birnbaum2008line,goel2008online,karp1990optimal} and random order inputs \cite{feldman2009online3,goel2008online,jaillet2014online,karande2011online,karp1990optimal,mahdian2011online}.% and has further been extended to the case of \emph{weighted} vertex matching where graph edges have a weight, $w$ \cite{aggarwal2011online,feldman2009online3,jaillet2014online}.
%The methodologies here provide key insight on our problem.
 % \input{introduction}

% \subsection{Preliminaries} \label{sec.prelim}% Notational definitions only, bring random order algo stuff up here
% \input{prelim-scalar}

% -------------------------------------
\section{Online Algorithm for the Santa Claus Problem in the Random Order Model}\label{sec.algo}
In this section, we present the approximately greedy algorithm, Algorithm \ref{alg:smooth_greedy}, and analyze its competitive ratio in the random order model. Building on the work of Molinaro for online scheduling~\cite{Molinaro_2017}, we use a greedy algorithm for a smoothed version of our objective function and a restart procedure during the online allocation process to reduce the impact of correlations that arise in this input model.

A natural strategy for our problem is to allocate the arriving item to the least satisfied agent. However, one can show this strategy has too high an additive regret \cite{Gupta_2014, Molinaro_2017} as the change in our solution value can vary quickly from one iteration to the next. Instead, we use a \emph{smoothed} version of the greedy algorithm. The algorithm is designed as follows: we first define $\phi_\ve$ to be a re-scaled variant of the \textsc{LogSumExp} function that serves as a smoothed minimum. For the first half of the input stream, we select an allocation for each arriving item that maximizes the increase in our smoothed objective function. This stage can be thought of as approximately greedy with respect to the \emph{gradient} of $\phi_\ve$. After $\frac{m}{2}$ items have been allocated, we ``restart'' the allocation by maximizing the increase in our objective with respect to the $t > \frac{m}{2}$ allocations only. This restart procedure is essential for reducing the correlations that arise in sampling without replacement in the random order model since at each iteration of the allocation procedure, our decision depends on at most $\frac{m}{2}-1$ items. The pseudocode of this procedure is presented in Algorithm \ref{alg:smooth_greedy}.

\begin{algorithm}[h]
	\SetAlgoNoLine
	\KwIn{$0 < \ve < 1$, input stream $\mathcal{M}$ of $m$ items}
	% \KwOut{$(1-\ve)$-competitive \textsc{MaxMin} allocation}
	Define $\phi_\ve(u) = -\frac{1}{\ve} \ln \left(\sum_i e^{-\ve u_i}\right)$\;
	\For{$t=1$ to $m/2$}
	{
	    Select $\bx^t \in \Delta^n$ to maximize $\phi_\ve \left(\sum_{\tau = 1}^t \bV^{\tau} \right)$\;
	}
	\For{$t=m/2 + 1$ to $m$}
	{
	    Select $\bx^t \in \Delta^n$ to maximize $\phi_\ve \left(\sum_{\tau = \frac{m}{2} + 1}^t \bV^{\tau} \right)$\;
	}
	\caption{\textsc{Smooth Greedy With Restart}}
	\label{alg:smooth_greedy}
\end{algorithm}

\subsection{Algorithm Analysis}

In the analysis of the competitive ratio of Algorithm \ref{alg:smooth_greedy}, we will leverage several key facts about the smoothed minimum function $\phi_\ve$. This smoothness implies that the gradient nicely captures incremental increase in the objective function, thus allocating with respect to this produces an essentially greedy process, allowing us to follow the analysis of \cite{AgraDev_2014,devanur2009adwords,Molinaro_2017} to get the desired guarantees for the random-order model.%  \todo{make this more clear}. 
We now state our main result in Theorem \ref{thm:ROM}.

\begin{theorem} \label{thm:ROM}
    For any $\ve > 0$, Algorithm \ref{alg:smooth_greedy} guarantees in the random-order input model that the expected value of the allocation assigned to any agent is at least
    $$(1 - \ve)\cdot  \opt{} - O \left(\frac{\log n}{\ve}\right).$$
\end{theorem}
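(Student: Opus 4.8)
The plan is to run the standard potential-function argument for greedy on a smoothed objective, adapted to the random-order model using the restart to kill long-range correlations.

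\textbf{Step 1 (properties of the smoothed minimum).} First I would record the facts about $\phi_\ve$ that drive everything: (i) $\min_i u_i - \tfrac{\ln n}{\ve} \le \phi_\ve(u) \le \min_i u_i$; (ii) $\phi_\ve$ is concave and coordinate-wise nondecreasing, with gradient $\nabla\phi_\ve(u)_i = e^{-\ve u_i}/\sum_j e^{-\ve u_j}$, which is a probability vector; and (iii) the ``$(1-\ve)$ first-order lower bound'': for every $\bW\ge 0$ with $\|\bW\|_\infty\le 1$,
$$\phi_\ve(u+\bW)-\phi_\ve(u)\ \ge\ (1-\ve)\,\langle\nabla\phi_\ve(u),\bW\rangle.$$
Property (iii), proved by a direct estimate of $-\tfrac1\ve\ln(\cdot)$ together with $1-e^{-x}\ge x(1-\tfrac x2)$, is exactly the source of the $(1-\ve)$ factor, while (i) is where the additive $O(\log n/\ve)$ enters, via $\phi_\ve(\mathbf 0)=-\tfrac{\ln n}{\ve}$.

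\textbf{Step 2 (reduce to a single phase, then telescope).} The total value handed to agent $i$ is $Z^1_i+Z^2_i$, where $Z^1,Z^2$ are the coordinate sums of the phase-$1$ and phase-$2$ allocations, so $\min_i(Z^1_i+Z^2_i)\ge \min_i Z^1_i+\min_i Z^2_i\ge \phi_\ve(Z^1)+\phi_\ve(Z^2)$; since the complement of a uniformly random half of the items is again a uniformly random half in random order, the two phases are symmetric, and it suffices to lower bound $\mathbb{E}[\phi_\ve(Z^p)]$ for one phase by $(1-\ve)\tfrac{\opt}{2}-O(\tfrac{\log n}{\ve})$. Fix a phase; let $\Phi^t$ be $\phi_\ve$ of the running within-phase allocation after step $t$, $p^{t-1}=\nabla\phi_\ve$ of it before step $t$, and let $\bVo^{\sigma(t)}$ be the offline optimum's allocation of the item $\sigma(t)$ arriving at step $t$ — a feasible choice for the algorithm. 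Greedy optimality together with property (iii) (with $u=$ running allocation, $\bW=\bVo^{\sigma(t)}$) gives $\Phi^t-\Phi^{t-1}\ge(1-\ve)\langle p^{t-1},\bVo^{\sigma(t)}\rangle$, and telescoping over the $m/2$ steps yields
$$\phi_\ve(Z^p)\ \ge\ (1-\ve)\!\!\sum_{t\in\text{phase}}\!\!\langle p^{t-1},\bVo^{\sigma(t)}\rangle\ -\ \tfrac{\ln n}{\ve}.$$

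\textbf{Step 3 (the crux: random-order averaging against $\opt$).} It remains to show $\mathbb{E}\big[\sum_{t\in\text{phase}}\langle p^{t-1},\bVo^{\sigma(t)}\rangle\big]\ge\tfrac{\opt}{2}-O(\tfrac{\log n}{\ve})$. The anchor is that $\sum_{j=1}^m\bVo^j\ge\opt\cdot\mathbf 1$ coordinatewise, so $\langle p^{t-1},\sum_j\bVo^j\rangle\ge\opt$ for the probability vector $p^{t-1}$; if the gradient were a \emph{fixed} dual $\alpha\in\Delta^n$, the sum over a uniformly random half $S$ of the items would be $\langle\alpha,\sum_{j\in S}\bVo^j\rangle$ with expectation $\tfrac12\langle\alpha,\sum_j\bVo^j\rangle\ge\tfrac{\opt}{2}$ (equivalently, via LP duality, $\opt=\min_{\alpha\in\Delta^n}\sum_j\max_i\alpha_i v_{ij}$). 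The real work is that $p^{t-1}$ varies and is correlated with which items remain, and this is where the restart is essential: because the phase-$2$ potential ignores phase-$1$ allocations, $p^{t-1}$ depends only on the (at most $m/2-1$) items already seen \emph{in this phase}; conditioning on that within-phase history $\mathcal{H}^{t-1}$, the arriving item is uniform over all items not yet seen in the phase, so at the $k$-th step
$$\mathbb{E}\big[\langle p^{t-1},\bVo^{\sigma(t)}\rangle\mid\mathcal{H}^{t-1}\big]\ \ge\ \tfrac{1}{m-k+1}\Big(\opt-\big\langle p^{t-1},\textstyle\sum_{s}\bVo^{\sigma(s)}\big\rangle\Big),$$
where the inner sum runs over the $k-1$ already-seen items of the phase. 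Summing these and controlling the ``already-consumed optimal mass'' term $\langle p^{t-1},\sum_s\bVo^{\sigma(s)}\rangle$ — using that the greedy's own increments dominate $(1-\ve)\langle p^{s-1},\bVo^{\sigma(s)}\rangle$, together with a stability estimate $\|p^t-p^{t-1}\|=O(\ve)$ for the slowly-varying gradient (a consequence of $\|\bV^t\|_\infty\le1$ and the $O(\ve)$ Hessian of $\phi_\ve$) — lets one couple the price trajectory to a near-optimal fixed dual with total slack $O(\tfrac{\log n}{\ve})$, in the spirit of \cite{devanur2009adwords,AgraDev_2014,Molinaro_2017}. Combining with Step 2 and summing the two phases gives $(1-\ve)^2\opt-O(\tfrac{\log n}{\ve})$; rescaling $\ve$ finishes the proof.

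\textbf{Main obstacle.} The bottleneck is precisely the Step-$3$ estimate of $\langle p^{t-1},\sum_s\bVo^{\sigma(s)}\rangle$: the trivial bound (number of items seen so far) is hopelessly weak — it would make the per-step bound negative once $k$ exceeds $\opt$ — and getting it down to $O(\opt)$ rather than $O(k)$ requires genuinely exploiting both (a) that greedy tracks the optimum in the $\phi_\ve$-gradient inner product and (b) that the restart caps the correlation depth at half the stream, so the without-replacement bias contributes only $O(1)$ per step to the relevant quantity. Everything else — the $\phi_\ve$ inequalities, the telescoping, the two-phase split, and the final $\ve$-rescaling — is routine.
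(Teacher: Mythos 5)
Your Steps 1 and 2 track the paper closely (the $\phi_\ve$ sandwich, the one-step greedy/first-order inequality, the telescope over a phase, the two-phase symmetry via the restart; presenting the ``greedy dominates the offline choice'' fact as a single $(1-\ve)$ first-order bound rather than the paper's two factors of $e^{-\ve}$ and $e^{-2\ve}$ is a cosmetic difference). The gap is in Step 3, and you flag it yourself. You keep the gradient $p^{t-1}$ inside the inner product and write the per-step conditional expectation as
\[
\Ex\big[\langle p^{t-1},\bVo^{\sigma(t)}\rangle\mid\mathcal{H}^{t-1}\big]\ \ge\ \tfrac{1}{m-k+1}\Big(\opt-\big\langle p^{t-1},\textstyle\sum_{s<k}\bVo^{\sigma(s)}\big\rangle\Big),
\]
and then you need the ``already-consumed mass'' term to be $O(\opt)$ rather than $O(k)$. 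That claim is never established: the invocation of a ``stability estimate $\|p^t-p^{t-1}\|=O(\ve)$'' and ``coupling the price trajectory to a near-optimal fixed dual'' is a description of intent, not an argument, and it is exactly the kind of adaptivity/correlation estimate that makes this route hard.

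The paper avoids that bookkeeping entirely in Lemma~\ref{lem:hoeff}. The key move is to \emph{drop} the adaptive gradient before touching the history: since $p^{t-1}$ is a probability vector depending only on $\bY^1,\dots,\bY^{k-1}$, one first passes to
\[
\Ex_{k-1}\langle \bY^k, p^{t-1}\rangle \ \ge\ \min_i\{\Ex_{k-1}\bY^k_i\}
\ =\ \min_i\Big\{\tfrac{m\mu_i-\sum_{s<k}\bY^s_i}{m-k+1}\Big\},
\]
which no longer involves $p^{t-1}$ at all. The residual $m\mu-\sum_{s<k}\bY^s$ has the same law (under sampling without replacement) as $\sum_{t\le m-k+1}\bY^t$, and its coordinate-wise minimum is then controlled by $\phi_\ve$ and a convexity argument. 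This yields a clean per-step bound $e^{-\ve}\min_i\Ex\bY^k_i - \tfrac{\ln n}{\ve(m-k+1)}$, whose additive losses sum over $k\le m/2$ to $O(\tfrac{\ln n}{\ve})$. So the correct repair to your Step~3 is not a sharper consumed-mass estimate but a reordering: bound $\langle p^{t-1},\cdot\rangle\ge\min_i(\cdot)_i$ \emph{first}, then exploit the without-replacement distributional identity, rather than trying to track the correlated inner product along the stream. As written, your proof of the theorem is incomplete at precisely the step you identified as the bottleneck.
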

Note that this theorem immediately implies the following corollary since the additive regret term can be absorbed in the multiplicative error for sufficiently large $\opt$:
\begin{corollary}
    For any $\ve > 0$, Algorithm \ref{alg:smooth_greedy} has a competitive ratio of $(1-\e)$ for $\opt \ge \Omega(\log n/\ve^2)$.
\end{corollary}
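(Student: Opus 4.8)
Since the corollary is a one-line consequence of Theorem~\ref{thm:ROM} (absorb the additive $O(\log n/\ve)$ into $\ve\cdot\opt$ once $\opt\ge\Omega(\log n/\ve^2)$, then rescale $\ve$), I will describe how I would prove the theorem and then deduce the corollary. The plan is to analyze the two phases separately --- the restart makes them essentially independent --- and add. First I would record the standard facts about the soft-min $\phi_\ve(u)=-\tfrac1\ve\ln\sum_i e^{-\ve u_i}$: the sandwich $\min_i u_i-\tfrac{\ln n}{\ve}\le\phi_\ve(u)\le\min_i u_i$; the gradient $\nabla\phi_\ve(u)=p(u)$ with $p_i(u)=e^{-\ve u_i}/\sum_j e^{-\ve u_j}\in\Delta^n$; and an approximate first-order bound for nonnegative bounded increments, $\phi_\ve(u+\delta)-\phi_\ve(u)\ge(1-\tfrac\ve2)\langle p(u),\delta\rangle$ whenever $\delta\ge 0$ and $\|\delta\|_\infty\le 1$ (from $e^{-x}\le 1-x+\tfrac{x^2}2$ and $-\ln(1-y)\ge y$). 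Since $v_i^t,x_i^t\in[0,1]$, every per-item increment $\bV^t$ has $\|\bV^t\|_\infty\le 1$, so the bound applies.

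\textbf{Greedy telescoping within a phase.} Fix one phase with item (arrival) set $S$, $|S|=m/2$, let $U^t$ be the algorithm's running total within that phase and $p^{t-1}=\nabla\phi_\ve(U^{t-1})$. The offline-optimal per-item allocation $\xopt^t$ is a feasible candidate for the greedy maximization (if $\sum_i\xopt_i^t<1$, moving the slack onto any coordinate only increases the objective since $p\ge 0$), so the increment bound together with the fact that the algorithm takes the maximizer gives $\phi_\ve(U^t)-\phi_\ve(U^{t-1})\ge(1-\tfrac\ve2)\langle p^{t-1},\bVo^t\rangle$. Telescoping over the phase, with $\phi_\ve(\mathbf 0)=-\tfrac{\ln n}{\ve}$ and the sandwich, for every agent $i$ the value this phase gives $i$ is at least $\phi_\ve(U^{\mathrm{end}})\ge -\tfrac{\ln n}{\ve}+(1-\tfrac\ve2)\sum_{t\in S}\langle p^{t-1},\bVo^t\rangle$.

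\textbf{The random-order estimate (the crux).} It remains to show $\Ex\big[\sum_{t\in S}\langle p^{t-1},\bVo^t\rangle\big]\ge(1-O(\ve))\cdot\tfrac\opt2$. Because of the restart, $p^{t-1}$ depends on at most $m/2-1$ already-seen items; conditioned on that prefix, the $t$-th item is (marginally) uniform over the $m/2-(t-1)$ not-yet-seen items of the phase, whose average offline-optimal contribution is $\tfrac{1}{m/2-(t-1)}$ times (OPT restricted to $S$) minus (the contribution of the processed items). Pairing this against the probability vector $p^{t-1}$ and summing the telescoping weights over the phase would recover, in expectation, OPT restricted to $S$ --- which is at least $\opt/2$ per agent since each item lies in $S$ with probability $1/2$ --- were it not for the correlation between $p^{t-1}$ and the identity of the already-arrived items. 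I would control that correlation with a martingale/Azuma-type concentration argument, in the spirit of Molinaro~\cite{Molinaro_2017} and the online-packing analyses of Agrawal et al.~\cite{Agrawal_2014} and Devanur et al.~\cite{devanur2009adwords,AgraDev_2014}; the restart is precisely what keeps the consumed fraction of OPT below one half, so the leftover is $\Theta(\opt)$. The deviations are affordable because we may assume $\opt\ge C\log n/\ve$: otherwise the target bound $(1-\ve)\opt-O(\log n/\ve)$ is already nonpositive and holds trivially, as the algorithm's output is nonnegative.

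\textbf{Combining the phases, and the corollary.} Applying the two steps above to phase $1$ (item set $S_1$) and phase $2$ (item set $S_2$, with its own freshly restarted potential), and noting that for each agent $i$ the final value satisfies $U_i^m=(\text{phase-1 value of }i)+(\text{phase-2 value of }i)\ge\min_j(\text{phase-1 value of }j)+\min_j(\text{phase-2 value of }j)$, we get $\Ex[U_i^m]\ge(1-O(\ve))\opt-\tfrac{2\ln n}{\ve}$; replacing $\ve$ by a constant multiple of itself yields Theorem~\ref{thm:ROM}. For the corollary, when $\opt\ge\Omega(\log n/\ve^2)$ with a large enough hidden constant we have $O(\log n/\ve)\le\ve\cdot\opt$, so $\Ex[U_i^m]\ge(1-O(\ve))\opt$, and one final rescaling of $\ve$ gives competitive ratio $1-\ve$. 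The main obstacle is the random-order estimate: decoupling the adaptively chosen weights $p^{t-1}$ from the random set of already-arrived items. The restart (capping that dependence at half the stream, hence the consumed OPT at one half) is the key device, following Molinaro, but converting it into a clean $(1-O(\ve))\opt/2$ bound still requires the concentration argument, and one must check that the loss it incurs stays within $O(\log n/\ve)$.
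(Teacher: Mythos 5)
Your derivation of the corollary from Theorem~\ref{thm:ROM} is exactly the paper's argument: with $\opt\ge\Omega(\log n/\ve^2)$ (large enough hidden constant), the additive $O(\log n/\ve)$ term is at most $\ve\cdot\opt$, so it is absorbed into the multiplicative factor, and a final rescaling of $\ve$ gives $(1-\ve)$. That part is correct and needs nothing more.

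But you spend most of the proposal on Theorem~\ref{thm:ROM} itself, and there the crux is left open. Your telescoping step within a phase is fine (your $(1-\ve/2)$ per-step bound via $e^{-x}\le 1-x+x^2/2$ and $-\ln(1-y)\ge y$ is a valid alternative to the paper's $e^{-\ve}$ via Lemma~\ref{lem:grad_bound}). The gap is the ``random-order estimate'': you propose to handle the correlation between the adaptive weights $p^{t-1}$ and the arrived prefix via a martingale/Azuma concentration argument, and you explicitly say this still needs to be carried out and its loss checked. The paper (following Molinaro) does \emph{not} use concentration here at all. Instead it proves Lemma~\ref{lem:hoeff}, a clean per-step \emph{expectation} bound: by exchangeability of sampling without replacement, $\Ex[\bY^k\mid \bY^1,\dots,\bY^{k-1}]$ equals the average of the $m-k+1$ unseen items, and that quantity has the same distribution as $\frac{1}{m-k+1}\sum_{t\le m-k+1}\bY^t$; applying $\phi_\ve$ and Jensen's inequality to this sum yields $\Ex\langle\bY^k,\bZ\rangle\ge e^{-\ve}\min_i\Ex\bY_i^k - \frac{\ln n}{\ve(m-k+1)}$ with no high-probability event and no deviation to control. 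Summing $\frac{\ln n}{\ve(m-t+1)}$ over $t\le m/2$ gives the $O(\log n/\ve)$ additive term directly, and no assumption on $\opt$ is needed for the theorem (your ``otherwise the bound is trivially nonpositive'' fallback is unnecessary). So the route you sketch is not the paper's route, it mischaracterizes Molinaro's argument, and it is not completed; if you wanted to push the Azuma approach you would still owe the reader a martingale, a bounded-differences bound, and a check that the accumulated deviation is $O(\log n/\ve)$, none of which is present.
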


In order to prove this theorem, we first show some properties of the smoothed minimum function $\phi_\ve$ utilized by Algorithm \ref{alg:smooth_greedy}. We will then prove some technical lemmas that will help establish the theorem.

Lemma \ref{lem:props} effectively defines the additive error with respect to our true objective, the agent-wise minimum, and stability of the smooth function following each allocation decision. As was shown in prior work, allocating with respect to the order statistics or even the $L_p^p$ norm produces either too high of a regret factor, or instability in the derivative value under small perturbations to the input value. As such, the smoothing and the following properties are critical to maintaining our regret and competitive ratio bounds.
\begin{lemma} \label{lem:props}
    For all $u \in \mathbb{R}^n$, $v \in [0,1]^n$, and $\ve > 0$, the function $\phi_\ve(x) = -\frac{1}{\ve}\ln \left( \sum_{i=1}^n e^{-\ve x_i} \right)$ satisfies the following:
    \begin{itemize}
        \item[(a)] $\min_i \{u_i\} - \frac{\ln n}{\ve} \leq \phi_\ve(u) < \min_i \{u_i\}$
        \item[(b)] $\nabla \phi_\ve (u + v) \in e^{\pm \ve} \cdot \nabla \phi_\ve (u)$
    \end{itemize}
    Furthermore, if $u_i \geq v_i$ for each $i \in [n]$, we have
    \begin{itemize}
        \item[(c)] $\phi_\ve(u-v) \leq \phi_\ve(u) - \phi_\ve(v)$ 
    \end{itemize}
\end{lemma}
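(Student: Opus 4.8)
The plan is to prove all three items directly from the defining formula $\phi_\ve(u)=-\tfrac1\ve\ln\bigl(\sum_{i=1}^n e^{-\ve u_i}\bigr)$, keeping careful track of the sign flip caused by the factor $-\tfrac1\ve$ in front of the logarithm.

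For (a), I would set $m=\min_i u_i$ and sandwich the inner sum: $e^{-\ve m}\le\sum_{i=1}^n e^{-\ve u_i}\le n\,e^{-\ve m}$, where the lower bound holds because the minimizing term equals $e^{-\ve m}$ and all other terms are positive, and the upper bound because every term is at most $e^{-\ve m}$. Since $x\mapsto-\tfrac1\ve\ln x$ is strictly decreasing, applying it reverses these inequalities and gives $m-\tfrac{\ln n}\ve\le\phi_\ve(u)\le m$; for $n\ge2$ the lower bound on the sum is strict, which upgrades the right inequality to $\phi_\ve(u)<m$.

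For (b), I would first compute the gradient: $\partial_i\phi_\ve(u)=e^{-\ve u_i}/\sum_{j}e^{-\ve u_j}$, so $\nabla\phi_\ve(u)$ is the vector of softmin weights (in particular a probability vector with positive entries). Then the coordinatewise ratio is
\[
\frac{\partial_i\phi_\ve(u+v)}{\partial_i\phi_\ve(u)}=e^{-\ve v_i}\cdot\frac{\sum_{j}e^{-\ve u_j}}{\sum_{j}e^{-\ve u_j}e^{-\ve v_j}}.
\]
Because $v_k\in[0,1]$ we have $e^{-\ve v_k}\in[e^{-\ve},1]$ for every $k$; hence the first factor lies in $[e^{-\ve},1]$ and the second factor lies in $[1,e^{\ve}]$, so the product lies in $[e^{-\ve},e^{\ve}]$, which is exactly the containment $\nabla\phi_\ve(u+v)\in e^{\pm\ve}\cdot\nabla\phi_\ve(u)$.

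For (c), I would convert the claim into a statement about the exponential sums. Writing out both sides and clearing the common $-\tfrac1\ve\ln(\cdot)$ (which flips the inequality direction), $\phi_\ve(u-v)\le\phi_\ve(u)-\phi_\ve(v)$ is equivalent to
\[
\Bigl(\sum_{i=1}^n e^{-\ve(u_i-v_i)}\Bigr)\Bigl(\sum_{j=1}^n e^{-\ve v_j}\Bigr)\;\ge\;\sum_{i=1}^n e^{-\ve u_i}.
\]
Expanding the left-hand product as the double sum $\sum_{i,j}e^{-\ve(u_i-v_i+v_j)}$, the diagonal terms $i=j$ contribute exactly $\sum_i e^{-\ve u_i}$, and every remaining term is nonnegative, so the inequality holds. (Note this argument does not actually use the hypothesis $u_i\ge v_i$; that assumption is stated only because in the application $u$ is a running partial sum that dominates $v$ coordinatewise.) None of these steps is long: the only one needing a non-obvious manipulation is (c), where the trick is to first rewrite the $\phi_\ve$ inequality as the purely multiplicative statement above and then extract the diagonal of the resulting double sum; the recurring point of care elsewhere is simply that applying $-\tfrac1\ve\ln(\cdot)$, or multiplying through by $-\ve$, reverses the direction of every inequality.
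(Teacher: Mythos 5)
Your proof is correct and follows essentially the same route as the paper's: sandwiching the exponential sum for (a), bounding the softmin gradient coordinatewise for (b), and reducing (c) to the multiplicative inequality $\bigl(\sum_i e^{-\ve(u_i-v_i)}\bigr)\bigl(\sum_j e^{-\ve v_j}\bigr)\ge\sum_i e^{-\ve u_i}$, whose double-sum expansion contains $\sum_i e^{-\ve u_i}$ on the diagonal. Your side observation that (c) in fact holds without the hypothesis $u_i\ge v_i$ is also correct, and your explicit note that the strict inequality in (a) needs $n\ge2$ is a detail the paper glosses over.
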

The proof of these properties are deferred to Appendix \ref{sec:omitted} due to space constraints.

Now utilizing these two properties we can prove the following important bound on the inner product of the smoothed minimum's gradient. This will be used throughout our analysis to bound the incremental change in the objective \textsc{MaxMin} value after each allocation decision, and the summation of these changes can be seen as the accumulated ``reward'' at any given stage of the input stream. The proof is by direct integration of stability property $(b)$ and is thus deffered to Appendix \ref{sec:omitted}.

\begin{lemma} \label{lem:grad_bound}
    For $u \in \mathbb{R}^n$ and $v,v' \in [0,1]^n$, if $\phi_\ve(u + v) \geq \phi_\ve (u + v')$ then $\langle \nabla \phi_\ve(u), v \rangle \geq e^{-2\ve} \langle \nabla \phi_\ve(u), v' \rangle$.
\end{lemma}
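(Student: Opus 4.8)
The plan is to derive the bound from a first-order (integral) comparison of $\phi_\ve$ along the two segments $s\mapsto u+sv$ and $s\mapsto u+sv'$ for $s\in[0,1]$, using the multiplicative gradient-stability property (b) of Lemma~\ref{lem:props}. First I would record two elementary observations. (i) The gradient of $\phi_\ve$ is coordinatewise nonnegative: a direct computation gives $\nabla\phi_\ve(u)_i = e^{-\ve u_i}/\sum_j e^{-\ve u_j}\ge 0$ (indeed these coordinates form a probability vector). (ii) Since $v,v'\in[0,1]^n$ have nonnegative coordinates, any coordinatewise $e^{\pm\ve}$ perturbation of $\nabla\phi_\ve$ induces an $e^{\pm\ve}$ perturbation of the inner products $\langle\nabla\phi_\ve(\cdot),v\rangle$ and $\langle\nabla\phi_\ve(\cdot),v'\rangle$. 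This is the only place positivity of $\nabla\phi_\ve$ and nonnegativity of $v,v'$ are used.

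Next I would apply the fundamental theorem of calculus to $s\mapsto\phi_\ve(u+sv)$, writing
\[
\phi_\ve(u+v)-\phi_\ve(u)=\int_0^1\big\langle\nabla\phi_\ve(u+sv),\,v\big\rangle\,ds .
\]
For each $s\in[0,1]$ we have $sv\in[0,1]^n$, so property (b) yields $\nabla\phi_\ve(u+sv)\in e^{\pm\ve}\,\nabla\phi_\ve(u)$, and by observation (ii) the integrand satisfies $\langle\nabla\phi_\ve(u+sv),v\rangle\le e^{\ve}\langle\nabla\phi_\ve(u),v\rangle$; integrating over $s$ gives $\phi_\ve(u+v)-\phi_\ve(u)\le e^{\ve}\langle\nabla\phi_\ve(u),v\rangle$. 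Running the same argument for $v'$ but keeping the lower estimate $\langle\nabla\phi_\ve(u+sv'),v'\rangle\ge e^{-\ve}\langle\nabla\phi_\ve(u),v'\rangle$ yields $\phi_\ve(u+v')-\phi_\ve(u)\ge e^{-\ve}\langle\nabla\phi_\ve(u),v'\rangle$.

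Finally I would chain these two estimates through the hypothesis $\phi_\ve(u+v)\ge\phi_\ve(u+v')$:
\[
e^{\ve}\,\langle\nabla\phi_\ve(u),v\rangle \;\ge\; \phi_\ve(u+v)-\phi_\ve(u) \;\ge\; \phi_\ve(u+v')-\phi_\ve(u) \;\ge\; e^{-\ve}\,\langle\nabla\phi_\ve(u),v'\rangle ,
\]
and dividing through by $e^{\ve}$ gives the claimed inequality $\langle\nabla\phi_\ve(u),v\rangle\ge e^{-2\ve}\langle\nabla\phi_\ve(u),v'\rangle$. There is no substantive obstacle here; the only steps requiring care are checking that the arguments $u+sv$ and $u+sv'$ stay in the regime where property (b) applies (which holds since $sv,sv'\in[0,1]^n$ for $s\in[0,1]$), and verifying that the multiplicative perturbation bound on $\nabla\phi_\ve$ transfers to the inner products, which is exactly where nonnegativity of $\nabla\phi_\ve$ and of $v,v'$ is needed.
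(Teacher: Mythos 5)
Your proof is correct and follows essentially the same route as the paper's: write $\phi_\ve(u+v)-\phi_\ve(u)$ as the integral $\int_0^1\langle\nabla\phi_\ve(u+\alpha v),v\rangle\,d\alpha$, apply the multiplicative gradient-stability property (b) of Lemma~\ref{lem:props} to bound the integrand above by $e^{\ve}\langle\nabla\phi_\ve(u),v\rangle$ (and, for the $v'$-segment, below by $e^{-\ve}\langle\nabla\phi_\ve(u),v'\rangle$), then chain through the hypothesis $\phi_\ve(u+v)\ge\phi_\ve(u+v')$. The one small refinement you add is to spell out explicitly that the coordinatewise $e^{\pm\ve}$ bound on $\nabla\phi_\ve$ transfers to the inner products only because $\nabla\phi_\ve\ge 0$ coordinatewise and $v,v'\ge 0$; the paper uses this implicitly.
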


We now follow in the intuition of Agrawal and Devanur \cite{AgraDev_2014} to prove Theorem \ref{thm:ROM} by bounding the incremental increase in our ``reward'' both before and after the restart at $t = \frac{m}{2}$. By implementing this restart in the allocation procedure, we segment the stream into two portions that have identical probabilistic guarantees and reduce the correlation between input elements to allow for an optimal competitive ratio and low additive regret.

\begin{proof}[Proof of Theorem \ref{thm:ROM}]
    We note again that in the approximately greedy procedure of Algorithm \ref{alg:smooth_greedy}, we are essentially seleting items greedily according to the \emph{gradient} of $\phi_\ve$ to maximize the incremental changes. Due to the restart at $m/2$, we define 
    \begin{gather*}
        \nabla^t = \nabla \phi_\ve\left(\sum_{\tau=1}^{t-1}\bV^{\tau}\right) \text{ for } t \leq \frac{m}{2} \text{ and } \nabla^t = \nabla \phi_\ve\left(\sum_{\tau=m/2 + 1}^{t-1}\bV^{\tau}\right) \text{ for } t > \frac{m}{2}
    \end{gather*}
    We now proceed by deriving a bound on $\min_i \{\sum_{\tau=1}^m\bV^{\tau}_{i}\}$ in terms of our smoothed-approximation function $\phi_\ve$, thus bounding the accumulated error by the algorithm when estimating our true objective. By the concavity of $\phi_\ve$ and Lemma \ref{lem:grad_bound}, we have that
    \begin{align*}
        \phi_\ve\left(\sum_{\tau=1}^t\bV^{\tau}\right) - \phi_\ve\left(\sum_{\tau=1}^{t-1}\bV^{\tau}\right) &\geq \left\langle \nabla \phi_\ve\left(\sum_{\tau=1}^t\bV^{\tau}\right), \bV^t \right\rangle \tag{Concavity} \\
        &\geq e^{-\ve} \left\langle \nabla \phi_\ve\left(\sum_{\tau=1}^{t-1}\bV^{\tau}\right), \bV^t \right\rangle \tag{Lemma \ref{lem:grad_bound}} = e^{-\ve} \langle \nabla^{t}, \bV^t \rangle.
    \end{align*}
    Without loss of generality we proceed by considering the first half of the input sequence ($t \leq \frac{m}{2}$). By summing this inequality over the input prior to the restart (from $t=1$ to $\frac{m}{2}$), we have
    \begin{align}
        \phi_\ve\left(\sum_{\tau=1}^{m/2}\bV^{\tau}\right) - \phi_\ve\left(0\right) = \phi_\ve\left(\sum_{\tau=1}^{m/2}\bV^{\tau}\right) + \frac{\ln n}{\ve} \geq e^{-\ve} \sum_{t=1}^{m/2} \langle \nabla^{t}, \bV^t \rangle \label{lem:olo_ineq}
    \end{align}
    Now, taking into account the allocation before and after the restart at $\frac{m}{2}$ and invoking Eq.~\ref{lem:olo_ineq} for each half of the input stream with the concavity of $\phi_\ve$, we obtain
    \begin{align*}
        \sum_{t=1}^m \langle \nabla^{t}, \bV^t \rangle &\leq e^\ve \left( \phi_\ve\left(\sum_{\tau=1}^{m/2}\bV^{\tau}\right) + \phi_\ve\left(\sum_{\tau=m/2 + 1}^{m}\bV^{\tau}\right) + \frac{2 \ln n}{\ve}\right) \tag{Ineq. \ref{lem:olo_ineq}}\\
        &\leq e^\ve \left( \phi_\ve\left(\sum_{\tau=1}^{m}\bV^{\tau}\right) + \frac{2 \ln n}{\ve}\right) \tag{Lemma \ref{lem:props}c} \\
        &\leq e^\ve \left( \min_i \{\sum_{\tau=1}^{m}\bV_i^{\tau}\} + \frac{2 \ln n}{\ve}\right) \tag{Lemma \ref{lem:props}a}
    \end{align*}
    % where the last inequality is a direct result of Lemma \ref{lem:props}.
    Lastly, by rearranging terms we can bound the element-wise minimum as
    \begin{align}
        \min_i \{\sum_{\tau=1}^{m}\bV_i^{\tau}\} \geq e^{-\ve} \left( \sum_{t=1}^m \langle \nabla^{t}, \bV^t\rangle \right) - \frac{2 \ln n}{\ve}. \label{ineq:min_bound}
    \end{align}
    Thus, the output of our algorithm will approximate the \emph{actual} objective function within a multiplicative factor of $e^{-\ve} \approx 1-\ve$ and an additive regret on the order of $\ln n / \ve$. 
    
    We now proceed to bound the gap between our algorithmic solution to that of the optimal \emph{offline} solution which, combined with the above error, will give the final result.
    As such, the final step of our analysis will be to take the expectation of this inequality to get the final bounds in Theorem \ref{thm:ROM}. Due to the restart and random order input model, the expected increase in maximal value over the first and second half is equivalent, $$\Ex\left[ \sum_{t=1}^{m/2} \langle \nabla^{t}, \bV^t\rangle\right] = \Ex\left[ \sum_{t=m/2+1}^{m} \langle \nabla^{t}, \bV^t\rangle\right]$$ so without loss of generality we need only bound the first half's value and apply this bound to both portions. %\todo{expand} 
    The benefit of this restart will yield a tolerable error as compared to the optimal offline solution in each half of the allocation procedure, rather than a continuously accumulating divergence between the two solutions: each arriving job's allocation is only dependent upon (at most) $\frac{n}{2} - 1$ other decisions. We leverage this randomness to obtain the final competitive guarantees \cite{molinaro2014geometry}. 
    
    Now, by the nature of greedy selection to maximize our objective function, we must have that in each iteration our algorithm's selection produces an increase in value that is at least as good as that of the optimal offline solution: %$\phi_\ve\left(\bV^t\right) \geq \phi_\ve(\bVo^t)$
    $\phi_\epsilon (\sum_{\tau=1}^t \bV^\tau)\geq \phi_\epsilon (\sum_{\tau=1}^{t-1} \bV^\tau+ \bVo^t)$
    for each $t$\footnote{The optimal offline algorithm may allocate in a manner that does not maximize the objective function's increase at iteration $t$ in anticipation of better allocation options later in the input sequence.}. Combining this with Lemma \ref{lem:grad_bound}, we must have 
    $$\langle \nabla^t, \bV^t \rangle \geq e^{-2\ve} \langle \nabla^t, \bVo^t \rangle$$
    Furthermore, since $\nabla \phi_\ve \in \ell_1^+$ \cite{gao2017properties} and $\nabla^t$ is independent of $\bV^t$, we prove the following purely probabilistic result that in Appendix \ref{sec:omitted} will be used to bound the above summations to derive the final result.

    \begin{lemma} \label{lem:hoeff}
    Consider a set of vectors $\{y^1, ..., y^m\} \in [0,1]^n$ and let $\{\bY^i\}_{i=1}^k$ be sampled without replacement. Let $\bZ \in \ell_1^+$ be a random vector that depends only on $\{\bY^i\}_{i=1}^{k-1}$. Then for all $\ve > 0$,
    $$\Ex\langle \bY^k, \bZ \rangle \geq e^{-\ve} \min_i \{\Ex\bY^k_i\} - \frac{\ln n}{\ve \left(m - k + 1 \right)}.$$
    \end{lemma}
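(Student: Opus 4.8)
Here is the plan.

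The plan is to condition on the first $k-1$ draws so as to reduce the claim to a statement that no longer involves $\bZ$. Let $\mathcal{F}_{k-1}$ be the $\sigma$-algebra generated by $\bY^1,\dots,\bY^{k-1}$. Conditioned on $\mathcal{F}_{k-1}$ the vector $\bZ$ is fixed, while $\bY^k$ is uniform over the $m-k+1$ vectors of $\{y^1,\dots,y^m\}$ not yet drawn, so $\mathbf{W}:=\Ex[\bY^k\mid\mathcal{F}_{k-1}]$ is exactly the coordinate-wise average of those $m-k+1$ remaining vectors and
\[
  \Ex\langle\bY^k,\bZ\rangle=\Ex\big[\langle\mathbf{W},\bZ\rangle\big].
\]
Since $\bZ$ is a probability vector and $\mathbf{W}\in[0,1]^n$, we have $\langle\mathbf{W},\bZ\rangle\ge\min_i\mathbf{W}_i$, which by Lemma~\ref{lem:props}(a) is at least $\phi_{\ve'}(\mathbf{W})$ for \emph{any} smoothing parameter $\ve'>0$. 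Thus it suffices to lower bound $\Ex[\phi_{\ve'}(\mathbf{W})]$ for a cleverly chosen $\ve'$; at this point $\bZ$ has disappeared and only the sampling randomness remains.

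For the second step I would expand $\phi_{\ve'}(\mathbf{W})=-\tfrac1{\ve'}\ln\big(\sum_i e^{-\ve'\mathbf{W}_i}\big)$ and apply Jensen's inequality to the convex map $s\mapsto-\tfrac1{\ve'}\ln s$, obtaining $\Ex[\phi_{\ve'}(\mathbf{W})]\ge-\tfrac1{\ve'}\ln\big(\sum_i\Ex[e^{-\ve'\mathbf{W}_i}]\big)$. Each $\mathbf{W}_i$ is $\tfrac1{m-k+1}$ times a sum of $m-k+1$ values sampled \emph{without replacement} from $\{y^1_i,\dots,y^m_i\}\subseteq[0,1]$, so Hoeffding's classical reduction (the moment generating function of a without-replacement sum is dominated by that of the corresponding with-replacement sum — presumably the reason the lemma is named \texttt{hoeff}) gives
\[
  \Ex\big[e^{-\ve'\mathbf{W}_i}\big]\;\le\;\Big(\tfrac1m\textstyle\sum_{j=1}^m e^{-\frac{\ve'}{m-k+1}y^j_i}\Big)^{m-k+1}.
\]
Using the chord bound $e^{-cy}\le 1-y(1-e^{-c})$ valid on $[0,1]$ together with $(1-x)^N\le e^{-xN}$, this is at most $\exp\!\big(-(m-k+1)\,(1-e^{-\ve'/(m-k+1)})\,\overline{y}_i\big)$, where $\overline{y}_i:=\tfrac1m\sum_j y^j_i=\Ex\bY^k_i$.

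The last step is the choice $\ve':=\ve\,(m-k+1)$, which collapses the inner exponent to $\ve$: summing over $i$ yields $\sum_i\Ex[e^{-\ve'\mathbf{W}_i}]\le n\,\exp\!\big(-(m-k+1)(1-e^{-\ve})\min_i\overline{y}_i\big)$, hence
\[
  \Ex[\phi_{\ve'}(\mathbf{W})]\;\ge\;\frac{1-e^{-\ve}}{\ve}\,\min_i\overline{y}_i\;-\;\frac{\ln n}{\ve\,(m-k+1)},
\]
and one finishes with the elementary inequality $\tfrac{1-e^{-\ve}}{\ve}\ge e^{-\ve}$ (equivalently $e^{\ve}\ge 1+\ve$) and $\overline{y}_i=\Ex\bY^k_i$. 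I expect the only delicate point to be securing the $\tfrac1{m-k+1}$ factor in the additive error rather than a bare $\tfrac1\ve$: this depends on recognizing $\mathbf{W}$ as an average of $m-k+1$ samples and then scaling the softmin temperature by $m-k+1$ before invoking the Hoeffding reduction — a more naive bound such as $\Ex[e^{-\ve'\mathbf{W}_i}]\le\Ex[e^{-\ve'\bY^k_i}]$ (one more application of Jensen through the conditional expectation) discards precisely this improvement.
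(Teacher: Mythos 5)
Your proof is correct, and it takes a genuinely different route from the paper's after the common opening move. Both arguments begin by conditioning on $\mathcal{F}_{k-1}$, observing that $\bZ$ is then a fixed probability vector so that $\Ex\langle\bY^k,\bZ\rangle \ge \Ex[\min_i \mathbf{W}_i]$ with $\mathbf{W}=\Ex[\bY^k\mid\mathcal{F}_{k-1}]$ the average of the $m-k+1$ not-yet-drawn vectors, and both pass from $\min_i$ to $\phi$ via Lemma~\ref{lem:props}(a). From there the paper applies $\phi_\ve$ to the \emph{unnormalized} sum $\sum_{t=1}^{m-k+1}\bY^t$ (using the distributional symmetry between the complementary set and a fresh prefix), then invokes ``Jensen'' in the form $\Ex[\phi_\ve(X)]\ge\phi_\ve(\Ex X)$, re-applies part (a), and recovers the $1/(m-k+1)$ factor by dividing at the end. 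You instead keep $\mathbf{W}$ normalized, rescale the temperature to $\ve'=\ve(m-k+1)$, push the expectation only through the genuinely convex outer map $s\mapsto-\tfrac1{\ve'}\ln s$, and then control $\Ex[e^{-\ve'\mathbf{W}_i}]$ via Hoeffding's without-replacement MGF domination, a chord bound for $e^{-cy}$ on $[0,1]$, and $(1-x)^N\le e^{-xN}$, finishing with $\tfrac{1-e^{-\ve}}{\ve}\ge e^{-\ve}$.

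Worth flagging: the step in the paper labeled ``Jensen's Ineq.'' asserts $\Ex[\phi_\ve(X)]\ge\phi_\ve(\Ex X)$ for the \emph{concave} function $\phi_\ve$, which is the reverse of what Jensen gives; repairing it seems to require precisely the MGF control you supply. So beyond being a clean alternative that also explains the lemma's name \texttt{hoeff}, your route is arguably the more rigorous one, at the cost of importing Hoeffding's reduction theorem as an external ingredient. One small point to make explicit if you write this up: the $m-k+1$ remaining vectors form a uniform random subset of that size (the complement of a uniform $(k-1)$-subset), which is exactly the hypothesis Hoeffding's reduction needs.
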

    \noindent Using this lower bounding result, we can now bound the sum of rewards as
    \begin{align} \label{ineq.opt_lb}
        \Ex\left[ \langle \nabla^t, \bVo^t \rangle \right] \geq e^{-\ve} \min_i \{\Ex\left[ \bVo^t_i \right]\} - \frac{\ln n}{\ve(m - t + 1)}.
    \end{align}
    Adding this inequality over all $t \leq \frac{m}{2}$ in combination with $\min_i \{\Ex[ \bVo^t]_i\} = \frac{\opt{}}{m}$ we conclude that 
    \begin{align*}
        e^{2\ve} \cdot \Ex \sum_{t \leq \frac{m}{2}} \langle \nabla^t, \bV^t \rangle &\geq \Ex \sum_{t \leq \frac{m}{2}} \langle \nabla^t, \bVo^t \rangle \tag{Lemma \ref{lem:props}} \\
        &\geq e^{-\ve} \sum_{t \leq \frac{m}{2}} \min_i \{\Ex\left[ \bVo^t_i \right]\} - \sum_{t \leq \frac{m}{2}} \frac{\ln n}{\ve(m - t + 1)}  \tag{Ineq. \ref{ineq.opt_lb}}\\
        &= e^{-\ve} \left( \frac{\opt{}}{2}\right) - \frac{\ln n}{\ve}
    \end{align*}
    Due to the restart at $t = \frac{m}{2}$, we can extend the sum to $t = m$ by simply doubling the above RHS. Finally, invoking inequality (\ref{ineq:min_bound}), we see that
    $$\Ex\left[ \min_i \sum_{\tau=1}^m\bV^{\tau} \right] \geq e^{-4\ve} \opt{} - O \left(\frac{\log n}{\ve}\right) \geq \left(1-O(\ve)\right) \opt{} - O\left(\frac{\log n }{\ve}\right).$$
    by the Taylor approximation $e^{-x} \geq 1-x$. %and the assumption on the size of \opt{}. % \todo{randomized rounding of solution}
\end{proof}

\subsection{Online Rounding Algorithm}\label{sec:rounding}The previous algorithm produces an online fractional solution for the Santa Claus problem. We now show that using simple randomized rounding, we can convert this into an integer solution. %This rounding step requires a slightly stronger assumption of $\opt \ge \Omega\left(\frac{\log n}{\e^2}\right)$ compared to the assumption of  $\opt \ge \Omega\left(\frac{\log n}{\e}\right)$ for the fractional algorithm.

\begin{theorem}\label{thm:rounding}
   Fix any $\e > 0$. Given a $(1-\e)$competitive online fractional algorithm for the Santa Claus problem, there is an online (integral) algorithm whose competitive ratio is $(1-O(\e))$, provided $\opt\ge \Omega\left(\frac{\log n}{\e^2}\right)$.
\end{theorem}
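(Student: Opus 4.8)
The plan is to round the online fractional solution by \emph{independent per-item rounding}: when item $t$ arrives and the fractional algorithm outputs $\bx^t\in\Delta^n$, irrevocably give item $t$ in full to agent $i$ with probability $x^t_i$, using fresh randomness independent of all earlier coin flips; this is manifestly online. Write $Z_i$ for the integral value that agent $i$ ultimately receives and $F_i:=\big(\sum_{\tau}\bV^{\tau}\big)_i=\sum_t v^t_i x^t_i$ for the fractional value the algorithm assigns to agent $i$. Since the fractional algorithm never observes the rounding, conditioning on the arrival order (and on the fractional algorithm's own coins, if any) fixes every $\bx^t$, and then $Z_i=\sum_t v^t_i\cdot\mathbf 1[\text{item }t\mapsto i]$ is a sum of independent $[0,1]$-valued random variables of mean $F_i$; in particular $\Ex[Z_i\mid\mathrm{order}]=F_i$.

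First I would invoke a multiplicative Chernoff bound agent by agent, $\Pr[Z_i<(1-\e)F_i\mid\mathrm{order}]\le e^{-\e^2 F_i/2}$, and fix a threshold $\tau:=C\e^{-2}\log n$ with $C$ a large enough absolute constant. Let $H$ be the event (over the arrival order) that $\min_i F_i\ge\tau$; on $H$, a union bound over the $n$ agents shows that, conditioned on the order, with probability at least $1-\e$ every agent satisfies $Z_i\ge(1-\e)F_i$, whence $\min_i Z_i\ge(1-\e)\min_i F_i$. Taking expectations over order and rounding and discarding the (nonnegative) contribution of $H^c$,
\[
  \Ex\big[\min_i Z_i\big]\ \ge\ (1-\e)^2\,\Ex\big[(\min_i F_i)\,\mathbf 1_H\big]\ \ge\ (1-\e)^2\Big(\Ex\big[\min_i F_i\big]-\Ex\big[(\min_i F_i)\,\mathbf 1_{H^c}\big]\Big).
\]
Because the given fractional algorithm is $(1-\e)$-competitive, $\Ex[\min_i F_i]\ge(1-\e)\opt$, so it remains only to bound the loss term $\Ex[(\min_i F_i)\mathbf 1_{H^c}]$.

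The crude estimate $\min_i F_i<\tau$ on $H^c$ already gives $\Ex[(\min_i F_i)\mathbf 1_{H^c}]\le\tau=O(\e^{-2}\log n)$, hence $\Ex[\min_i Z_i]\ge(1-O(\e))\opt-O(\e^{-2}\log n)$ and competitive ratio $1-O(\e)$ once $\opt=\Omega(\e^{-3}\log n)$. To recover the sharper hypothesis $\opt=\Omega(\e^{-2}\log n)$ stated in the theorem I would additionally observe that the fractional allocation $(x^t_i)_{t,i}$ together with $\lambda:=\min_i F_i$ is feasible for the assignment LP of the instance, so $\min_i F_i$ never exceeds that LP's optimum, which by its additive integrality gap \cite{bezakova2005allocating} is at most $\opt+\max_{t,i}v^t_i\le\opt+1$. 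Applying Markov's inequality to the nonnegative variable $(\opt+1)-\min_i F_i$ then gives $\Pr[H^c]=\Pr[\min_i F_i<\tau]\le\frac{(\opt+1)-\Ex[\min_i F_i]}{\opt+1-\tau}=O(\e)$ for $\opt\ge 2\tau$, whence $\Ex[(\min_i F_i)\mathbf 1_{H^c}]\le\tau\cdot\Pr[H^c]=O(\e^{-1}\log n)$; substituting back, $\Ex[\min_i Z_i]\ge(1-O(\e))\opt-O(\e^{-1}\log n)\ge(1-O(\e))\opt$ under $\opt=\Omega(\e^{-2}\log n)$.

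The step I expect to be the main obstacle is the control of the bad event $H^c$: the fractional guarantee bounds only the \emph{expectation} of $\min_i F_i$, and there is no obvious reason for $\min_i F_i$ to be concentrated (a greedy fractional rule need not be Lipschitz in the arrival order), so one cannot simply assert ``$\min_i F_i$ is large with high probability.'' The remedy is the two-sided squeeze above --- the trivial bound $\min_i F_i<\tau$ on $H^c$, paired with the LP upper bound $\min_i F_i\le\opt+1$ that holds for \emph{every} realization --- which converts the first-moment guarantee into an adequate tail bound. Everything else (the Chernoff and union bounds, the choice of $C$, Markov's inequality, and absorbing the $O(\e^{-1}\log n)$ additive term into $O(\e)\opt$) is routine.
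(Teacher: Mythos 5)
Your rounding scheme is the same as the paper's (independent per-item assignment with probability $x^t_i$, then Chernoff + union bound), but the argument you build around it is genuinely different and, in fact, more careful. The paper's proof implicitly treats the hypothesis ``$(1-\e)$-competitive fractional algorithm'' as a \emph{pointwise} guarantee $\sum_j v_{ij}x_{ij}\ge(1-\e)\opt$ for every agent in every realization: this is exactly what licenses the step $\exp(-\tfrac{\e^2}{3}\sum_j v_{ij}x_{ij})\le\exp(-\tfrac{\e^2}{3}(1-\e)\opt)$. But the fractional algorithm the paper actually constructs (Theorem~\ref{thm:ROM}) guarantees $\Ex[\min_i F_i]\ge(1-\e)\opt-O(\log n/\e)$ only \emph{in expectation}, so the composition is not immediate. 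You correctly identify this as the crux and supply the missing tail control: since you cannot argue $\min_i F_i$ concentrates, you instead squeeze it between the trivial lower estimate $\min_i F_i<\tau$ on $H^c$ and the pointwise upper estimate $\min_i F_i\le\text{LP-opt}\le\opt+1$ (via the Bezakova--Dani additive integrality gap), and convert the first-moment guarantee to a tail bound via Markov. This recovers $\opt=\Omega(\e^{-2}\log n)$ rather than the weaker $\opt=\Omega(\e^{-3}\log n)$ that the crude estimate alone would give. What the paper's shorter proof buys is simplicity under the stronger reading of ``competitive''; what yours buys is that it actually composes with the in-expectation guarantee that Algorithm~\ref{alg:smooth_greedy} delivers, so it is the version one would want if the theorem is to be applied to the paper's own fractional algorithm.
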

\begin{proof}
    The algorithm is simply randomized rounding. If an item $j$ is allocated with fraction $x_{ij}$ to agent $i$ such that $\sum_{i=1}^n x_{ij} \le 1$ by the fractional solution, then we assign item $i$ to agent $j$ with probability $x_{ij}$. Note that since $v_{ij}\in [0, 1]$, the fractional value derived by an agent $i$ from an item $j$, given by $v_{ij} x_{ij}$ is also in $[0, 1]$. Thus, by Chernoff bounds, the probability that the total value of agent $i$ in the rounded assignment is less than $(1-\e)\sum_{j=1}^m v_{ij} x_{ij}$ is at most
    \[
        \exp\left(-\frac{\e^2}{3} \cdot\sum_{j=1}^m v_{ij} x_{ij}\right)
        \le \exp\left(-\frac{\e^2}{3} \cdot (1-\e) \cdot \opt\right)
        \le \frac{1}{n^2},
        \text{ for } \opt \ge \Omega\left(\frac{\log n}{\e^2}\right).
    \]
     Thus, with probability at least $1-\frac{1}{n^2}$, we have
    \[
        (1-\e)\sum_{j=1}^m v_{ij} x_{ij} \ge (1-\e)^2\cdot \opt > (1-2\e)\cdot \opt.
    \]
    It follows that the probability that the total value of agent $i$ in the rounded assignment is less than $(1-2\e)\cdot \opt$ is at most $\frac{1}{n^2}$. 
    Using the union bound over the $n$ agents, we get that the probability that the total value of {\em any} agent in the rounded assignment is less than $(1-2\e)\cdot \opt$ is at most $\frac1n$. Thus, the expected value of the objective is at least $(1-\frac1n)(1-2\e)\cdot \opt > (1-3\e)\cdot \opt$ for large enough $n$.
\end{proof}

% -------------------------------------
\section{Random Order Lower Bound}\label{sec.bounds}We here present an impossibility result on the Santa Claus problem under the random order input models. 
\eat{
Following from Theorem \ref{thm:ad-upper}, it is clear that the online result for the Santa Claus problem is highly nontrivial and dependent upon the adversarial selections. For worst-case instances, we cannot hope to achieve a guarantee that is any greater than the most pessimistic possibility.

While in the random-order and \textsc{iid} input models the impact of such an adversary is diminished, we demonstrate that nearly optimal online results can only be achieved if the offline solution is large enough. This condition has been studied extensively for similar problems like bin-packing and covering which demonstrates a $\Omega (\log n / \ve^2)$ lower bound \cite{Agrawal_2014}, while other problems such as load balancing and makespan do not have a full consensus on the tightest bound. We here present the first result for the Santa Claus problem under the random-order assumption in Theorem \ref{thm:lower}. % We note that while other papers studying the \textsc{MaxMin} objective in the \textsc{iid} model assume a lower bound of $\Omega(\log n / \ve)$ or even the stronger $\Omega(\log n / \ve^2)$, none have proven the necessity and tightness of such a bound \cite{Gollapudi_2014,Kawase_2021}.
}

\begin{theorem}\label{thm:lower}
    For any $\gamma \in (0, 1)$, if a randomized algorithm \alg for the online Santa Claus problem with random order input satisfies 
    \[
        \alg \ge (1-\gamma) \cdot \opt,
    \]
    then $\opt \ge \frac{C \ln n}{\gamma}$ for some (absolute) constant $C > 0$.
\end{theorem}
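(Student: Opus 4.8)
The plan is to prove the contrapositive by an explicit construction combined with Yao's minimax principle. Fix any integer target value $V$ with $1 \le V < \frac{C\ln n}{\gamma}$, where $C$ is a small absolute constant to be fixed at the end. For each ``needy'' agent $i^* \in [n]$ define the instance $I_{i^*}$ consisting of $V$ copies of the basis vector $e_j$ (value $1$ for agent $j$, value $0$ for all others) for every $j \ne i^*$, together with $V$ copies of the all-ones vector $\mathbf{1}$ (the ``flexible'' items); these $nV$ vectors then arrive in uniformly random order. Since agent $i^*$ has positive value only for the $V$ flexible items (each of value $1$), we get $\opt(I_{i^*}) = V$: assign each $e_j$ to $j$ and every flexible item to $i^*$, and note no agent can exceed $V$. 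I would then draw $i^* \sim \mathrm{Unif}([n])$; together with the random arrival order this turns the input seen by a \emph{deterministic} algorithm into a fully random object whose instance always has optimum $V$. By Yao's principle it then suffices to show that for every deterministic algorithm, $\Ex\big[\min_i \sum_t v^t_i x^t_i\big] < (1-\gamma)V$, the expectation being over $i^*$ and the arrival order; averaging over instances produces a single $I_{i^*}$ on which no (possibly randomized) algorithm beats $(1-\gamma)\opt$.

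Fix a deterministic algorithm. Since the objective is at most the value accrued by the needy agent, and that value comes only from flexible items, $\Ex\big[\min_i \sum_t v^t_i x^t_i\big] \le \sum_{k=1}^{V} \Ex\big[x^{t_k}_{i^*}\big]$, where $t_k$ is the index of the $k$-th flexible item and $x^{t_k}_{i^*} \in [0,1]$ the fraction of it given to $i^*$. Let $A_k$ be the set of agents that have \emph{not} appeared as the target of any $e_j$ seen strictly before the $k$-th flexible item. The heart of the argument is that, conditioned on the history up to that point, $i^*$ is distributed \emph{uniformly on $A_k$}: an agent that has appeared cannot be needy (its instance contains no such $e_j$), and for any two agents that have not appeared, the global relabeling symmetry of the construction that swaps their roles fixes the observed history, so they are equally likely to be $i^*$. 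Because $x^{t_k}$ is a deterministic function of the history and $\sum_a x^{t_k}_a \le 1$, this gives $\Ex[x^{t_k}_{i^*}\mid \text{history}] = \sum_{a\in A_k} x^{t_k}_a/|A_k| \le 1/|A_k|$, hence $\Ex\big[\min_i \sum_t v^t_i x^t_i\big] \le \sum_{k=1}^{V}\Ex[1/|A_k|]$. Intuitively, the online algorithm can do no better on the $k$-th flexible item than a blind uniform guess among the agents that have not yet outed themselves as non-needy.

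It remains to estimate $\Ex[1/|A_k|]$. Writing $|A_k| = 1 + N_k$ with $N_k = \sum_{j\ne i^*}\mathbf{1}[E_j^{(k)}]$, where $E_j^{(k)}$ is the event that none of agent $j$'s $V$ targeted items precedes the $k$-th flexible item, an exchangeability computation gives $p_k := \Pr[E_j^{(k)}] = \prod_{i=0}^{k-1}\frac{V-i}{2V-i} \in \big[(\tfrac{V-k}{2V})^k,\, 2^{-k}\big]$, so $\mu_k := \Ex[N_k] = (n-1)p_k$. The events $\{E_j^{(k)}\}_j$ are decreasing functions of disjoint blocks of a negatively associated family (the inclusion indicators of the targeted items seen so far, after conditioning on how many there are), so combining the identity $\Ex[\tfrac{1}{1+N_k}] = \int_0^1 \Ex[t^{N_k}]\,dt$ with the bound $\Ex[t^{N_k}] \le \prod_j \Ex\big[t^{\mathbf{1}[E_j^{(k)}]}\big] \le e^{-\mu_k(1-t)}$ yields $\Ex[1/|A_k|] \le \min(1,\,1/\mu_k)$ \emph{without any concentration inequality}. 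For $k \le \tfrac14\log_2 n$ one checks $\mu_k = \Omega(\sqrt n)$, so those terms sum to $o(1)$, and every remaining term is at most $1$; hence $\Ex\big[\min_i \sum_t v^t_i x^t_i\big] \le o(1) + \max\!\big(0,\, V - \tfrac14\log_2 n\big)$, which is strictly below $(1-\gamma)V$ whenever $\gamma V < \tfrac14\log_2 n - o(1)$, i.e. for all $V < \frac{C\ln n}{\gamma}$ with $C$ a suitable absolute constant (and $n$ large enough). The main obstacle I expect is the symmetry step of the second paragraph: one must argue rigorously that no online strategy can exploit the arrival history to localize the needy agent faster than uniform guessing over the not-yet-revealed agents. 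Everything after that — the formula for $p_k$, the negative-association bound on $\Ex[1/|A_k|]$, and the final choice of $V$ — is routine balls-in-bins bookkeeping.
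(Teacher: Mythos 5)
Your hard instance is identical to the paper's (your ``needy'' agent is its ``public'' agent, your flexible items are its public items, and $V=k$), but your analysis takes a genuinely different route. The paper fixes an $\e$-prefix, invokes a concentration lemma of Devanur--Hayes to show that with constant probability the prefix contains $\Omega(\e k)$ public items while at least one private type is entirely absent, and then argues by an informal symmetry that the public agent forfeits a constant fraction of those prefix items. You instead go flexible-item-by-flexible-item, apply Yao's principle to randomize $i^*$, and use the clean observation that given the history $i^*$ is uniform on the set $A_k$ of not-yet-revealed agents, so the $k$-th flexible item contributes at most $\Ex[1/|A_k|]$ to the needy agent. The symmetry step you flag as the ``main obstacle'' is in fact fine: for any two unrevealed agents $i_1,i_2$ the multisets $I_{i_1},I_{i_2}$ differ only by a relabeling fixing the observed prefix, so the likelihoods $\Pr[\text{prefix}\mid i^*=i_1]$ and $\Pr[\text{prefix}\mid i^*=i_2]$ coincide and the posterior is uniform on $A_k$. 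Your approach is more quantitative than the paper's and avoids any concentration inequality, which is an attractive feature.

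The actual gap is in the step you dismiss as routine. The events $E_j^{(k)}$ are \emph{not} negatively associated unconditionally; they are \emph{positively} correlated, because they all depend favorably on the $k$-th flexible item arriving early. Concretely, with $n=3$, $V=1$, $k=1$, the three items $e_1,e_2,\mathbf 1$ give $\Pr[E_1]=\Pr[E_2]=1/2$ but $\Pr[E_1\cap E_2]=1/3 > 1/4$, so the product bound $\Ex[t^{N_k}]\le\prod_j\Ex[t^{\mathbf 1[E_j]}]$ already fails at $t=0$. The phrase ``after conditioning on how many there are'' is the right instinct -- conditioned on the position $T$ of the $k$-th flexible item the inclusion indicators are a without-replacement sample and the $E_j$'s become NA -- but that only yields $\Ex[1/|A_k|\mid T]\le \min\bigl(1,\,1/\mu_k(T)\bigr)$ with a $T$-dependent mean $\mu_k(T)$, and averaging over $T$ does not give $\min(1,1/\mu_k)$: the map $x\mapsto\min(1,1/x)$ is convex, so Jensen runs the wrong way. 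To close the argument you need to control $\Ex_T\bigl[\min(1,1/\mu_k(T))\bigr]$ directly (e.g., using that $T-k$ concentrates around $k(n-1)$, which would itself require a concentration argument not so different in spirit from the one the paper imports), or replace the generating-function bound with a more robust estimate of $\Ex[1/|A_k|]$ that tolerates positive correlation. The downstream bookkeeping (the formula for $p_k$, the choice of $V$) is otherwise sound up to constants, though the specific threshold $\tfrac14\log_2 n$ needs $k\le V/2$ and should really be stated as $\min(V/2,\,c\log_2 n)$ for some constant $c$.
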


We use the following construction from the proof of Theorem \ref{thm:ad-upper}. There are $n$ agents, of which $n-1$ are {\em private} agents and the remaining one is a {\em public} agent. Every private agent has $k$ distinct private items for which their valuation is $1$ each, and the valuation for every other agent $0$. In addition, there are $k$ public items, each of which has a valuation of $1$ for every agent. The optimal solution is to assign the private items to the corresponding private agents, and the public items to the public agent. Thus, $\opt = k$.

Now, when the items are presented in uniform random order, consider the first $\e$ fraction of presented items -- call this the $\e$-prefix. Our main claim is that with constant probability, the following statements both hold: 
\begin{enumerate}
    \item[(a)] there are about $\e$ fraction of public items in this $\e$-prefix
    \item[(b)] there is at least one type of private item that is missing from this $\e$-prefix.
\end{enumerate}
For property (a), we need the following concentration inequality from Devanur and Hayes~\cite{devanur2009adwords}:
\begin{lemma}[Lemma 3 in \cite{devanur2009adwords}]\label{lem:dh}
     Let $Y = (Y_1, \ldots, Y_m)$ be a vector of real numbers, and let $\e \in (0,1)$. Let $S$ be a random subset of $[m]$ of size $\e m$, and set $Y_S := \sum_{j\in S} Y_j$. Then, for every $\d \in (0,1)$,
     \[
        \Pr\left[|Y_S - \Ex[Y_S]| \ge \frac{2}{3} \|Y\|_\infty \ln \left(\frac{2}{\d}\right) + \|Y\|_2 \sqrt{2\e \ln \left(\frac{2}{\d}\right)}\right] 
        \le \d.
    \]
\end{lemma}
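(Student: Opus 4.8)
The plan is to prove the contrapositive: I show that on the construction above, if $\opt=k$ and $k$ is below $C\ln n/\gamma$ for a small absolute constant $C$, then the algorithm's expected objective is $<(1-\gamma)k$ on some instance of this form. Since all such instances are isomorphic under relabeling the agents and $\min_i\mathrm{val}_i$ is symmetric, it suffices to lower bound the algorithm's expected regret against the distribution $\mathcal D$ that first draws the public agent $p$ uniformly from $[n]$ and then presents the resulting $nk$ items in uniform random order: under $\mathcal D$ we have $\opt=k$ surely and $\Ex_{\mathcal D}[\min_i\mathrm{val}_i]\le \Ex_{\mathcal D}[\mathrm{val}_p]$, so if $\Ex_{\mathcal D}[\mathrm{val}_p]<(1-\gamma)k$ then, by averaging over $p$, some choice of public agent gives an instance with $\opt=k<C\ln n/\gamma$ on which $\alg<(1-\gamma)\opt$. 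I fix the prefix fraction $\e$ to be a suitable constant multiple of $\gamma$, chosen so that also $\e k=O(\ln n)$ (possible exactly because $k<C\ln n/\gamma$; in the degenerate case where $\e k$ would be too small for the concentration bounds below, enlarge $\e$ to make $\e k$ a large enough absolute constant, which remains $O(\ln n)$ since $n$ is large).

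Consider the $\e$-prefix (the first $\e nk$ arrivals); let $q$ be the number of public items in it, $A_p$ the number of those the algorithm assigns to agent $p$, and $R$ the set of agents whose private items have appeared in the prefix. Since agent $p$ values only public items and at most $k-q$ public items arrive after the prefix, $\mathrm{val}_p\le A_p+(k-q)$ (and always $\mathrm{val}_p\le k$). The structural heart of the argument is that, conditioned on the prefix viewed as a sequence of value-vectors, $p$ is uniform over $[n]\setminus R$ and independent of the algorithm's internal randomness: for a prefix sequence $s$ to occur under $\mathcal D$, $p$ cannot be one of the revealed agents $R_s$; any permutation of the labels in $[n]\setminus R_s$ fixes $s$ and carries the $\mathcal D$-instance with public agent $p$ to the one with public agent $\tau(p)$, so $p\mid(\text{prefix}=s)$ is uniform on $[n]\setminus R_s$; and the algorithm never observes $p$. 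Hence $\Ex[A_p\mid \text{prefix}=s]\le q_s/N_s$ where $N_s:=|[n]\setminus R_s|$, and therefore
\[
  \Ex[\mathrm{val}_p\mid \text{prefix}=s]\ \le\ k - q_s\Bigl(1-\tfrac{1}{N_s}\Bigr).
\]

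It remains to show that with constant probability $p_0$ (arbitrarily close to $1$) the prefix satisfies both (a) $q_s\ge \e k/2$ and (b) $N_s\ge 2$. Property (a) follows from Lemma~\ref{lem:dh} applied with $Y$ the $0/1$ indicator vector of the public items (so $\|Y\|_\infty=1$, $\|Y\|_2=\sqrt k$, $\Ex[Y_S]=\e k$) and a small constant $\d$: once $\e k$ is a large enough absolute constant, the deviation term $\tfrac23\ln(2/\d)+\sqrt{2\e k\ln(2/\d)}$ is at most $\e k/2$, so $q\ge \e k/2$ with probability $\ge 1-\d$. For property (b), let $Z$ count the private agents all of whose $k$ items miss the $\e$-prefix; then $\Ex[Z]=(n-1)(1-o(1))(1-\e)^k\to\infty$ because $\e k=O(\ln n)$ with a small enough constant, the events ``agent $j$ is missed'' are pairwise negatively correlated under sampling without replacement so $\mathrm{Var}(Z)\le \Ex[Z]$, and Chebyshev gives $\Pr[Z\ge 1]\ge \Ex[Z]/(1+\Ex[Z])\to 1$; on $\{Z\ge1\}$ we have $N_s\ge Z+1\ge 2$. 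Combining, $\Ex[\mathrm{val}_p\mid \text{(a)}\wedge\text{(b)}]\le k-\e k/4$, so $\Ex_{\mathcal D}[\mathrm{val}_p]\le k-(p_0\e/4)k<(1-\gamma)k$ once $\e$ is a large enough constant multiple of $\gamma$ and $p_0$ is close enough to $1$. This contradicts $(1-\gamma)$-competitiveness and forces $k=\opt=\Omega(\ln n/\gamma)$, which is the theorem.

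I expect the main obstacle to be the symmetry/averaging step: rigorously arguing that, conditioned on everything the algorithm has observed during the prefix, the public agent is spread uniformly over all not-yet-revealed agents and independently of the algorithm's (adaptive, randomized) prefix allocations — this is exactly what forces the algorithm to misassign, in expectation, a constant fraction of the public items that arrive early, a loss it can never recover afterward. A secondary technical point is property (b): one needs genuine anti-concentration (not just a bound on the mean) for the number of ``hidden'' private agents, which I would obtain via the second-moment/Chebyshev method together with the negative association of the relevant indicator events under sampling without replacement.
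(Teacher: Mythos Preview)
Your proposal does not address the stated lemma at all. The statement labeled \texttt{lem:dh} is a concentration inequality for sampling without replacement, quoted verbatim from Devanur and Hayes~\cite{devanur2009adwords}; the paper does not prove it and simply invokes it as an external tool. A proof of this lemma would be a Bernstein/Hoeffding-type martingale argument for hypergeometric sampling, not an argument about agents, items, prefixes, or competitive ratios.

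What you have actually written is a proof sketch of Theorem~\ref{thm:lower} (the random-order lower bound), in which Lemma~\ref{lem:dh} is used as a black box for property~(a). On that score, your argument is broadly the same as the paper's: same construction (one public agent, $n-1$ private agents, $k$ items each), same two prefix events (many public items present, some private type absent), same use of Lemma~\ref{lem:dh} for property~(a), and the same indistinguishability idea to force misallocation of public items. The differences are packaging: you phrase things as a contrapositive and average over the choice of public agent, you run a second-moment/Chebyshev argument for property~(b) where the paper directly upper-bounds the probability that every private type appears by $(1-p)^{n-1}$ via negative correlation, and your symmetry step (``$p$ uniform on $[n]\setminus R$ given the prefix'') is a cleaner and more explicit version of the paper's one-line claim that the algorithm ``cannot distinguish'' the missing private agent from the public agent. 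None of this, however, constitutes a proof of the lemma you were asked to prove.
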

\noindent Property (a) concerning the fraction of public items in the $\e$-prefix follows almost immediately from the above lemma:
\begin{lemma}\label{lem:public}
     The probability that there are fewer than $\frac{5\e}{12}$ fraction of public items in the $\e$-prefix is at most $2 \exp{(-\e k / 8)}$.%$2 e^{-\frac{\e k}{8}}$.
\end{lemma}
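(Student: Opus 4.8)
The plan is to apply the Devanur--Hayes concentration inequality (Lemma~\ref{lem:dh}) directly to the indicator vector of the public items. Recall that in the construction there are $n-1$ private agents, each owning $k$ private items, together with $k$ public items, so the total number of items is $m = nk$ and the $\e$-prefix is a uniformly random subset $S \subseteq [m]$ of size $\e m = \e n k$.

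First I would set $Y_j = 1$ if the $j$-th item is public and $Y_j = 0$ otherwise. Then $\|Y\|_\infty = 1$ and $\|Y\|_2 = \sqrt{k}$, since there are exactly $k$ public items; moreover each public item lands in $S$ with marginal probability $\e m / m = \e$, so by linearity of expectation $\Ex[Y_S] = \e k$. The quantity $Y_S$ is precisely the number of public items in the $\e$-prefix, and ``fewer than $\frac{5\e}{12}$ fraction of public items'' means $Y_S < \frac{5\e}{12}k = \Ex[Y_S] - \frac{7\e k}{12}$.

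Next I would invoke Lemma~\ref{lem:dh} with the choice $\d = 2\exp(-\e k/8)$, so that $\ln(2/\d) = \e k/8$. Substituting into the deviation bound, the first term becomes $\frac{2}{3}\|Y\|_\infty \ln(2/\d) = \frac{2}{3}\cdot\frac{\e k}{8} = \frac{\e k}{12}$, and the second term becomes $\|Y\|_2\sqrt{2\e\ln(2/\d)} = \sqrt{k}\cdot\sqrt{2\e\cdot\frac{\e k}{8}} = \frac{\e k}{2}$, for a total of exactly $\frac{7\e k}{12}$. Hence
\[
  \Pr\!\left[\,Y_S < \tfrac{5\e}{12}k\,\right]
  \;\le\; \Pr\!\left[\,\bigl|Y_S - \Ex[Y_S]\bigr| \ge \tfrac{7\e k}{12}\,\right]
  \;\le\; \d \;=\; 2\exp(-\e k/8),
\]
which is exactly the claimed bound.

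There is no real obstacle here: the only design choice is selecting $\d$ so that the two terms of the Devanur--Hayes bound add up precisely to the available slack $\Ex[Y_S] - \frac{5\e k}{12} = \frac{7\e k}{12}$, and the constant $\frac{5}{12}$ in the statement is simply what this bookkeeping produces (it could be traded off against the exponent $\e k/8$). The one point worth double-checking is that $\|Y\|_2 = \sqrt{k}$ — independent of $n$ — which is exactly why the construction includes only $k$ public items; this keeps the ``fluctuation'' term $\|Y\|_2\sqrt{2\e\ln(2/\d)}$ comparable to, rather than dwarfing, the mean $\e k$.
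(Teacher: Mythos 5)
Your proof is correct and takes essentially the same approach as the paper: apply Lemma~\ref{lem:dh} to the indicator vector of the $k$ public items with $\d = 2\exp(-\e k/8)$, so the two terms of the bound sum to exactly $\frac{7\e k}{12}$. The only minor difference is that you explicitly note the one-sided event $Y_S < \frac{5\e}{12}k$ is contained in the two-sided deviation event $|Y_S - \Ex[Y_S]| \ge \frac{7\e k}{12}$, a step the paper leaves implicit.
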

\begin{proof}
    We invoke \ref{lem:dh} with the following setting of variables. Let $Y$ be a binary vector where $Y_i = 1$ for $i\in [k]$ and $Y_i = 0$ otherwise. ($Y_i$ is the indicator for whether an item is a public item.) $S$ represents the set of indices in the $\e$-prefix. Then, $Y_S$ counts the number of public items in the $\e$-prefix in a random ordering of the items. Clearly, $\Ex[Y_S] = \e k$, $\|Y\|_\infty = 1$, and $\|Y\|_2 = \sqrt{k}$. Finally, set $\d = 2 e^{-\frac{\e k}{8}}$, or more so $\ln \left(\frac{2}{\d}\right) = \frac{\e k}{8}$. Now, by \ref{lem:dh}, we have
    \[
        \Pr\left[|Y_S - \e k| \ge \frac{2}{3} \cdot \frac{\e k}{8} + \sqrt{k}\cdot \sqrt{2\e \cdot \frac{\e k}{8}}\right] 
        = \Pr\left[|Y_S - \e k| \ge \frac{7}{12} \cdot \e k\right] 
        \le 2 e^{-\frac{\e k}{8}}.
    \]
    and the lemma follows.
\end{proof}
\noindent Now, we show property (b) to demonstrate the non-existence of at least one type of private item in the $\e$-prefix.
\begin{lemma}\label{lem:private}
    The probability that all the $n-1$ types of private items appear in the $\e$-prefix is at most $\exp{\left(-(n-1) \cdot 4^{\frac{-\e k}{1-\e}}\right)}$.
\end{lemma}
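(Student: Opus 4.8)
The plan is to estimate, for each private type $i\in[n-1]$, the probability $\Pr[B_i]$ that \emph{none} of its $k$ items lands in the $\e$-prefix, and then to multiply these per-type estimates together, exploiting the negative dependence among the $n-1$ types, to bound the probability that every type appears.

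\textbf{Per-type estimate.} There are $m=nk$ items in all ($k$ items for each of the $n-1$ private types, plus $k$ public items), and the $\e$-prefix is a uniformly random subset of size $\e m$. Hence $B_i$ is exactly the event that this random $\e m$-subset avoids the $k$-element block belonging to type $i$, so
\[
    \Pr[B_i] \;=\; \frac{\binom{m-k}{\e m}}{\binom{m}{\e m}} \;=\; \prod_{j=0}^{\e m-1}\left(1-\frac{k}{m-j}\right).
\]
For every $j\le \e m-1$ we have $m-j>(1-\e)m$, so each factor is at least $1-\frac{k}{(1-\e)m}=1-\frac{1}{(1-\e)n}$. Taking $n$ large enough that $\frac{1}{(1-\e)n}\le \frac12$ and applying the elementary inequality $1-x\ge 4^{-x}$, which holds on $[0,\tfrac12]$ (the function $\ln(1-x)+x\ln 4$ is concave and vanishes at $0$ and $\tfrac12$), we obtain
\[
    \Pr[B_i]\;\ge\; 4^{-\frac{\e m}{(1-\e)n}}\;=\;4^{-\frac{\e k}{1-\e}}\;=:\;p .
\]

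\textbf{Combining over types.} Write $\bar B_i$ for the event that type $i$ \emph{does} appear in the $\e$-prefix. The claim amounts to $\Pr\!\big[\bigcap_{i=1}^{n-1}\bar B_i\big]\le \exp\!\big(-(n-1)p\big)$, which follows once we show
\[
    \Pr\Big[\textstyle\bigcap_{i=1}^{n-1}\bar B_i\Big]\;\le\;\prod_{i=1}^{n-1}\Pr[\bar B_i]\;\le\;(1-p)^{\,n-1}\;\le\;e^{-(n-1)p},
\]
where the last two steps use the per-type estimate and $1-p\le e^{-p}$. The first inequality is the product (negative-correlation) bound: the indicators $\mathbf 1[\ell\in\text{prefix}]$ of a uniformly random $\e m$-subset are negatively associated (the classical $0/1$-with-fixed-sum distribution), and each $\mathbf 1[\bar B_i]$ is a nondecreasing function of the disjoint block of these indicators belonging to type $i$; hence the $\mathbf 1[\bar B_i]$ are themselves negatively associated and $\Ex\big[\prod_i\mathbf 1[\bar B_i]\big]\le\prod_i\Ex[\mathbf 1[\bar B_i]]$. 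Equivalently, one can reveal the types one at a time and argue $\Pr[\bar B_i\mid \bar B_1,\dots,\bar B_{i-1}]\le\Pr[\bar B_i]$, since conditioning on the earlier types having been hit only depletes the pool of prefix slots available to type $i$.

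\textbf{Main obstacle.} The per-type estimate is a routine computation; the delicate point is the product bound $\Pr[\bigcap\bar B_i]\le\prod\Pr[\bar B_i]$. One genuinely needs negative dependence across all $n-1$ types to produce the factor $n-1$ in the exponent, and a naive chain-rule argument breaks down precisely in the regime where many types have already been hit — there the crude counting bound on the number of free prefix slots becomes too weak, and it is negative association (or the careful conditioning argument above) that rescues it.
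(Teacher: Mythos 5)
Your proof is correct and follows essentially the same two steps as the paper: a per-type lower bound on $\Pr[B_i]$ via the factorization of the hypergeometric probability (which is exactly the paper's chain-rule decomposition, written as a ratio of binomial coefficients), followed by the product bound $\Pr[\cap_i \bar B_i]\le \prod_i\Pr[\bar B_i]$, which the paper justifies by asserting that the events $\bar B_i$ are negatively correlated. The only substantive difference is that you spell out why the product bound holds — the inclusion indicators of a fixed-size random subset are negatively associated, and the $\mathbf 1[\bar B_i]$ are nondecreasing functions of disjoint coordinate blocks, hence themselves NA — whereas the paper leaves this as a one-line assertion; your version is the more rigorous one (mere pairwise negative correlation would not suffice for the product inequality over $n-1$ events). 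The concavity argument for $1-x\ge 4^{-x}$ on $[0,\tfrac12]$ also matches the paper's use of the hypothesis $n\ge \frac{2}{1-\e}$.
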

\vspace{-4mm}
\begin{proof}
    Fix a type of private item, say those of type-$j$, i.e. only agent $j$ (for some $j\in [n-1]$) has unit value for this type of item while all other agents have $0$ value. First, we bound the probability that no item of type-$j$ appears in the $\e$-prefix. To do this, note that this probability can be written, using the chain rule for conditional probabilities, as the product (over $i$ from $1$ to $\e nk$) of the probabilities of the $i$th item in the $\e$-prefix not being a type-$j$ item under the condition that the first $i-1$ items were not type-$j$ items either. Clearly, this probability, for any $i\le \e nk$, is at least $1-\frac{1}{(1-\e)n}$ since there are $k$ items of type-$j$ among at most $(1-\e)nk$ items overall after the conditioning. Thus, the probability that no item of type-$j$ appears in the $\e$-prefix is at least
    \[
        \left(1-\frac{1}{(1-\e)n}\right)^{\e nk} \ge 4^{\frac{-\e}{1-\e}\cdot k} \text{ by choosing } n \ge \frac{2}{1-\e}.
    \]
    Denote $p = 4^{\frac{-\e}{1-\e}\cdot k}$. Consider the events that at least one item of type-$j$ appears in the $\e$-prefix. These events are negatively correlated and therefore, the probability that at least one item of type-$j$ appears in the $\e$-prefix for every $j\in [n-1]$ is at most
    \[
        (1-p)^{n-1} 
        \le \left(1-4^{\frac{-\e}{1-\e}\cdot k}\right)^{n-1}
        \le e^{-4^{\frac{-\e}{1-\e} k}(n-1)}.
    \]
\end{proof}
\vspace{-2mm}
Now, by setting $k = \frac{1-\e}{2\e} \cdot \lg (n-1)$ we obtain
\eat{
\begin{equation}\label{eq:k}
    k = \frac{1-\e}{2\e} \cdot \lg (n-1).
\end{equation}
}
% Then, 
\[
    (n-1) \cdot 4^{\frac{-\e}{1-\e} k}
    = (n-1) \cdot 2^{-\frac{2\e}{1-\e} \cdot \frac{1-\e}{2\e}\cdot \lg (n-1)}
    = (n-1) \cdot 2^{-\lg (n-1)}
    = 1,
\]
and furthermore,
\[
    \frac{\e k}{8} = \frac{1-\e}{16}\cdot \lg(n-1).
\]
Plugging these expressions into \ref{lem:public} and \ref{lem:private}, we get that the probability of every private item type appearing in the $\e$-prefix is at most $\frac1e$ and the probability of fewer than $\frac{5\e}{12}$ fraction of public items appearing in the $\e$-prefix is at most $2 e^{-\frac{1-\e}{16}\cdot \lg(n-1)}$. The latter probability can be further simplified to
\[
    2 e^{-\frac{1-\e}{16}\cdot \lg(n-1)}
    = 2 e^{-\frac{1-\e}{16}\cdot \frac{\ln(n-1)}{\ln 2}}
    = \left(\frac{2}{n-1}\right)^{\frac{1-\e}{16\ln 2}}
    < \frac1e \text{ for large enough } n.
\]
Using the union bound over these two events, we conclude that with probability at least $1-\frac2e$, there is at least one missing private item type in the $\e$-prefix, and also at least $\frac{5\e}{12}$ fraction of private items appear in the $\e$-prefix. In this case, in the $\e$-prefix, the algorithm cannot distinguish between the private agent whose item type is missing and the public agent. Thus, in expectation (over the randomness of the algorithm), at least half the public items in the $\e$-prefix are assigned to the private agent whose item type is missing. As a consequence, at least $\frac{5\e}{24}$ fraction of the public items are not allocated to the public agent in the entire algorithm, which means that the total valuation of the public agent is at most $\left(1 - \frac{5\e}{24}\right) k$. Thus, in order to guarantee 
\[
    \alg \ge (1-\gamma)\cdot \opt,
\]
we need $\gamma \ge \frac{5\e}{24}$, i.e., $\frac1\e \ge \frac{5}{24\gamma}$. This implies by the set value of $k$ that %\Cref{eq:k} that
\[
    \opt
    = k 
    = \frac{1-\e}{2\e}\cdot \lg (n-1)
    = \frac{\frac1\e-1}{2}\cdot \lg (n-1)
    \ge \frac{\frac{5}{24\gamma}-1}{2}\cdot \lg (n-1)
    \ge C\cdot \frac{\ln n}{\gamma} \text{ for some constant } C.
\]
This completes the proof of \ref{thm:lower}.\qed

%\section{Conclusion}\label{sec.con}\input{conclusion}
In the current work, we have presented impossibility results for the online Santa Claus problem in adversarial and random order input models, as well as a near optimal algorithm for the random order setting. These results effectively address the necessary assumptions on the problem to obtain optimal online solutions, and furthermore obtain these results via simplistic algorithms. % Though the time complexity of these algorithms is not presented here, both are fast and do not rely on solving an LP at any stage as is the case for most results on the more complex bin-packing or AdWords problems \cite{Devanur_2011,devanur2009adwords,kesselheim2014primal}
Since the \textsc{MaxMin} objective function is the dual of \textsc{MinMax} used in the makespan and load balancing literature, we hope that the impossibility results presented here can be carried over to address the remaining open questions regarding their tightest additive terms in competitive ratio analysis.

% A necessary future direction is to bridge the gap between our upper and lower bounding results, which we demonstrate to differ by a factor of $\ve$. 
% Future work should extend the results to the case of \emph{weighted} or asymmetric agents. This setting is important in the online advertising space as normally competing agents have some effective budgets that grant them priority in the allocation process. Furthermore, it would be of great interest to examine the setting in which arriving item values are only revealed after the allocation decision is made -- an experts problem variant on the presented work. Though this extension may introduce too much complexity to handle with an optimal competitive ratio, it is nonetheless a logical progression. 

\nocite{Kawase_2021}
\nocite{Alaei_2013,Azar_1998,devanur2009adwords,Devanur_2011,Devanur_2012,Epstein_2011,Epstein_2018,Goel_2008,Liu_2021,Mehta_2007,Panigrahi_2012}

% \clearpage

\eat{
\section{Broader Impact}
Strong algorithmic results to fair allocation problems can have tremendous impacts on key societal issues such as allocation of medical services or approaches to voter redistricting. In a highly relevant case, fair allocation decisions have played a major role in the distribution of COVID-19 vaccines and mathematically founded approaches would produce a highly equitable result for the afflicted regions. Recent work has also explored the implications of mathematical approaches to state voting redistricting to combat gerrymandering and political biases, to which fair division provides a logical framework to improve.}

\section{Acknowledgements and Disclosure of Funding}
MohammadTaghi Hajiaghayi was upported in part by NSF CCF grants 2114269 and 2218678. Debmalya Panigrahi was supported in part by NSF grants CCF-1750140 (CAREER) and CCF-1955703 and ARO grant W911NF2110230. Max Springer was supported by the National Science Foundation Graduate Research Fellowship Program under Grant No. DGE 1840340. Any opinions, findings, and conclusions or recommendations expressed in this material are those of the author(s) and do not necessarily reflect the views of the National Science Foundation.

\bibliographystyle{acm}
\bibliography{neurips_2022}

\begin{thebibliography}{10}

\bibitem{AgraDev_2014}
{\sc Agrawal, S., and Devanur, N.~R.}
\newblock Fast algorithms for online stochastic convex programming.
\newblock In {\em Proceedings of the twenty-sixth annual ACM-SIAM symposium on
  Discrete algorithms\/} (2014), SIAM, pp.~1405--1424.

\bibitem{Agrawal_2014}
{\sc Agrawal, S., Wang, Z., and Ye, Y.}
\newblock A dynamic near-optimal algorithm for online linear programming, 2014.

\bibitem{Alaei_2013}
{\sc Alaei, S., Hajiaghayi, M., and Liaghat, V.}
\newblock The online stochastic generalized assignment problem.
\newblock In {\em Approximation, Randomization, and Combinatorial Optimization.
  Algorithms and Techniques}. Springer, 2013, pp.~11--25.

\bibitem{aleksandrov2020online}
{\sc Aleksandrov, M., and Walsh, T.}
\newblock Online fair division: A survey.
\newblock In {\em Proceedings of the AAAI Conference on Artificial
  Intelligence\/} (2020), vol.~34, pp.~13557--13562.

\bibitem{annamalai2016combinatorial}
{\sc Annamalai, C., Kalaitzis, C., and Svensson, O.}
\newblock Combinatorial algorithm for restricted max-min fair allocation, 2016.

\bibitem{Asadpour_2012}
{\sc Asadpour, A., Feige, U., and Saberi, A.}
\newblock Santa claus meets hypergraph matchings.
\newblock {\em ACM Trans. Algorithms 8}, 3 (jul 2012).

\bibitem{Asadpour_2010}
{\sc Asadpour, A., and Saberi, A.}
\newblock An approximation algorithm for max-min fair allocation of indivisible
  goods.
\newblock {\em SIAM Journal on Computing 39}, 7 (2010), 2970--2989.

\bibitem{Azar_1998}
{\sc Azar, Y., and Epstein, L.}
\newblock On-line machine covering.
\newblock {\em Journal of Scheduling 1}, 2 (1998), 67--77.

\bibitem{babaioff2007knapsack}
{\sc Babaioff, M., Immorlica, N., Kempe, D., and Kleinberg, R.}
\newblock A knapsack secretary problem with applications.
\newblock In {\em Approximation, randomization, and combinatorial optimization.
  Algorithms and techniques}. Springer, 2007, pp.~16--28.

\bibitem{balseiro2019learning}
{\sc Balseiro, S.~R., and Gur, Y.}
\newblock Learning in repeated auctions with budgets: Regret minimization and
  equilibrium.
\newblock {\em Management Science 65}, 9 (2019), 3952--3968.

\bibitem{bamas2020submodular}
{\sc Bamas, E., Garg, P., and Rohwedder, L.}
\newblock The submodular santa claus problem in the restricted assignment case,
  2020.

\bibitem{Bansal_2006}
{\sc Bansal, N., and Sviridenko, M.}
\newblock The santa claus problem.
\newblock In {\em Proceedings of the Thirty-Eighth Annual ACM Symposium on
  Theory of Computing\/} (New York, NY, USA, 2006), STOC '06, Association for
  Computing Machinery, p.~31–40.

\bibitem{bezakova2005allocating}
{\sc Bez{\'a}kov{\'a}, I., and Dani, V.}
\newblock Allocating indivisible goods.
\newblock {\em ACM SIGecom Exchanges 5}, 3 (2005), 11--18.

\bibitem{blanchard2021accurately}
{\sc Blanchard, P., Higham, D.~J., and Higham, N.~J.}
\newblock Accurately computing the log-sum-exp and softmax functions.
\newblock {\em IMA Journal of Numerical Analysis 41}, 4 (2021), 2311--2330.

\bibitem{Blum_2006}
{\sc Blum, A., Sandholm, T., and Zinkevich, M.}
\newblock Online algorithms for market clearing.
\newblock {\em Journal of the ACM (JACM) 53}, 5 (2006), 845--879.

\bibitem{borodin2005online}
{\sc Borodin, A., and El-Yaniv, R.}
\newblock {\em Online computation and competitive analysis}.
\newblock cambridge university press, 2005.

\bibitem{Buchbinder_2007}
{\sc Buchbinder, N., Jain, K., and Naor, J.~S.}
\newblock Online primal-dual algorithms for maximizing ad-auctions revenue.
\newblock In {\em European Symposium on Algorithms\/} (2007), Springer,
  pp.~253--264.

\bibitem{ChakrabartyCK09}
{\sc Chakrabarty, D., Chuzhoy, J., and Khanna, S.}
\newblock On allocating goods to maximize fairness.
\newblock In {\em 50th Annual {IEEE} Symposium on Foundations of Computer
  Science, {FOCS} 2009, October 25-27, 2009, Atlanta, Georgia, {USA}\/} (2009),
  {IEEE} Computer Society, pp.~107--116.

\bibitem{chen2011line}
{\sc Chen, X., and Qin, S.}
\newblock On-line machine covering on two machines with local migration.
\newblock {\em Computers \& Mathematics with Applications 62}, 5 (2011),
  2336--2341.

\bibitem{cheng2018restricted}
{\sc Cheng, S.-W., and Mao, Y.}
\newblock Restricted max-min fair allocation, 2018.

\bibitem{cheng2019restricted}
{\sc Cheng, S.-W., and Mao, Y.}
\newblock Restricted max-min allocation: Approximation and integrality gap,
  2019.

\bibitem{conitzer2021multiplicative}
{\sc Conitzer, V., Kroer, C., Sodomka, E., and Stier-Moses, N.~E.}
\newblock Multiplicative pacing equilibria in auction markets.
\newblock {\em Operations Research\/} (2021).

\bibitem{davies2019tale}
{\sc Davies, S., Rothvoss, T., and Zhang, Y.}
\newblock A tale of santa claus, hypergraphs and matroids, 2019.

\bibitem{devanur2009adwords}
{\sc Devanur, N.~R., and Hayes, T.~P.}
\newblock The adwords problem: online keyword matching with budgeted bidders
  under random permutations.
\newblock In {\em Proceedings of the 10th ACM conference on Electronic
  commerce\/} (2009), pp.~71--78.

\bibitem{Devanur_2012}
{\sc Devanur, N.~R., and Jain, K.}
\newblock Online matching with concave returns.
\newblock In {\em Proceedings of the forty-fourth annual ACM symposium on
  Theory of computing\/} (2012), pp.~137--144.

\bibitem{Devanur_2011}
{\sc Devanur, N.~R., Jain, K., Sivan, B., and Wilkens, C.~A.}
\newblock Near optimal online algorithms and fast approximation algorithms for
  resource allocation problems.
\newblock In {\em Proceedings of the 12th ACM conference on Electronic
  commerce\/} (2011), pp.~29--38.

\bibitem{Epstein_2018}
{\sc Epstein, L.}
\newblock A survey on makespan minimization in semi-online environments.
\newblock {\em Journal of Scheduling 21}, 3 (2018), 269--284.

\bibitem{Epstein_2011}
{\sc Epstein, L., Levin, A., and van Stee, R.}
\newblock Max-min online allocations with a reordering buffer.
\newblock {\em SIAM Journal on Discrete Mathematics 25}, 3 (2011), 1230--1250.

\bibitem{feige2008allocations}
{\sc Feige, U.}
\newblock On allocations that maximize fairness.
\newblock In {\em SODA\/} (2008), vol.~8, Citeseer, pp.~287--293.

\bibitem{feldman2010online}
{\sc Feldman, J., Henzinger, M., Korula, N., Mirrokni, V.~S., and Stein, C.}
\newblock Online stochastic packing applied to display ad allocation.
\newblock In {\em European Symposium on Algorithms\/} (2010), Springer,
  pp.~182--194.

\bibitem{feldman2009online}
{\sc Feldman, J., Korula, N., Mirrokni, V., Muthukrishnan, S., and P{\'a}l, M.}
\newblock Online ad assignment with free disposal.
\newblock In {\em International workshop on internet and network economics\/}
  (2009), Springer, pp.~374--385.

\bibitem{feldman2009online2}
{\sc Feldman, J., Mehta, A., Mirrokni, V., and Muthukrishnan, S.}
\newblock Online stochastic matching: Beating 1-1/e.
\newblock In {\em 2009 50th Annual IEEE Symposium on Foundations of Computer
  Science\/} (2009), IEEE, pp.~117--126.

\bibitem{gao2017properties}
{\sc Gao, B., and Pavel, L.}
\newblock On the properties of the softmax function with application in game
  theory and reinforcement learning.
\newblock {\em arXiv preprint arXiv:1704.00805\/} (2017).

\bibitem{goel2008online}
{\sc Goel, G., and Mehta, A.}
\newblock Online budgeted matching in random input models with applications to
  adwords.
\newblock In {\em SODA\/} (2008), vol.~8, Citeseer, pp.~982--991.

\bibitem{Goel_2008}
{\sc Goel, G., and Mehta, A.}
\newblock Online budgeted matching in random input models with applications to
  adwords.
\newblock In {\em SODA\/} (2008), vol.~8, Citeseer, pp.~982--991.

\bibitem{Gollapudi_2014}
{\sc Gollapudi, S., and Panigrahi, D.}
\newblock Fair allocation in online markets.
\newblock In {\em Proceedings of the 23rd ACM International Conference on
  Conference on Information and Knowledge Management\/} (2014), pp.~1179--1188.

\bibitem{Gupta_2014}
{\sc Gupta, A., and Molinaro, M.}
\newblock How experts can solve lps online.
\newblock In {\em European Symposium on Algorithms\/} (2014), Springer,
  pp.~517--529.

\bibitem{Hajiaghayi_2007}
{\sc Hajiaghayi, M.~T., Kleinberg, R., and Sandholm, T.}
\newblock Automated online mechanism design and prophet inequalities.
\newblock In {\em AAAI\/} (2007), vol.~7, pp.~58--65.

\bibitem{he2005optimal}
{\sc He, Y., and Jiang, Y.}
\newblock Optimal semi-online preemptive algorithms for machine covering on two
  uniform machines.
\newblock {\em Theoretical Computer Science 339}, 2-3 (2005), 293--314.

\bibitem{jansen2018note}
{\sc Jansen, K., and Rohwedder, L.}
\newblock A note on the integrality gap of the configuration lp for restricted
  santa claus, 2018.

\bibitem{karande2011online}
{\sc Karande, C., Mehta, A., and Tripathi, P.}
\newblock Online bipartite matching with unknown distributions.
\newblock In {\em Proceedings of the forty-third annual ACM symposium on Theory
  of computing\/} (2011), pp.~587--596.

\bibitem{karp1990optimal}
{\sc Karp, R.~M., Vazirani, U.~V., and Vazirani, V.~V.}
\newblock An optimal algorithm for on-line bipartite matching.
\newblock In {\em Proceedings of the twenty-second annual ACM symposium on
  Theory of computing\/} (1990), pp.~352--358.

\bibitem{Kawase_2021}
{\sc Kawase, Y., and Sumita, H.}
\newblock Online max-min fair allocation.
\newblock {\em arXiv preprint arXiv:2111.07235\/} (2021).

\bibitem{kesselheim2014primal}
{\sc Kesselheim, T., T{\"o}nnis, A., Radke, K., and V{\"o}cking, B.}
\newblock Primal beats dual on online packing lps in the random-order model.
\newblock In {\em Proceedings of the forty-sixth annual ACM symposium on Theory
  of computing\/} (2014), pp.~303--312.

\bibitem{kleinberg2005multiple}
{\sc Kleinberg, R.}
\newblock A multiple-choice secretary algorithm with applications to online
  auctions.
\newblock In {\em Proceedings of the sixteenth annual ACM-SIAM symposium on
  Discrete algorithms\/} (2005), Citeseer, pp.~630--631.

\bibitem{lenstra1990approximation}
{\sc Lenstra, J.~K., Shmoys, D.~B., and Tardos, {\'E}.}
\newblock Approximation algorithms for scheduling unrelated parallel machines.
\newblock {\em Mathematical programming 46}, 1 (1990), 259--271.

\bibitem{lipton2004approximately}
{\sc Lipton, R.~J., Markakis, E., Mossel, E., and Saberi, A.}
\newblock On approximately fair allocations of indivisible goods.
\newblock In {\em Proceedings of the 5th ACM Conference on Electronic
  Commerce\/} (2004), pp.~125--131.

\bibitem{Liu_2021}
{\sc Liu, F.-H., Liu, H.-H., and Wong, P.~W.}
\newblock Greedy is optimal for online restricted assignment and smart grid
  scheduling for unit size jobs.
\newblock {\em Theory of Computing Systems 65}, 6 (2021), 1009--1032.

\bibitem{mahdian2011online}
{\sc Mahdian, M., and Yan, Q.}
\newblock Online bipartite matching with random arrivals: an approach based on
  strongly factor-revealing lps.
\newblock In {\em Proceedings of the forty-third annual ACM symposium on Theory
  of computing\/} (2011), pp.~597--606.

\bibitem{Mehta_2007}
{\sc Mehta, A., Saberi, A., Vazirani, U., and Vazirani, V.}
\newblock Adwords and generalized online matching.
\newblock {\em Journal of the ACM (JACM) 54}, 5 (2007), 22--es.

\bibitem{Molinaro_2017}
{\sc Molinaro, M.}
\newblock Online and random-order load balancing simultaneously.
\newblock In {\em Proceedings of the Twenty-Eighth Annual ACM-SIAM Symposium on
  Discrete Algorithms\/} (2017), SIAM, pp.~1638--1650.

\bibitem{molinaro2014geometry}
{\sc Molinaro, M., and Ravi, R.}
\newblock The geometry of online packing linear programs.
\newblock {\em Mathematics of Operations Research 39}, 1 (2014), 46--59.

\bibitem{nielsen2016guaranteed}
{\sc Nielsen, F., and Sun, K.}
\newblock Guaranteed bounds on information-theoretic measures of univariate
  mixtures using piecewise log-sum-exp inequalities.
\newblock {\em Entropy 18}, 12 (2016), 442.

\bibitem{Panigrahi_2012}
{\sc Panigrahi, D., Das~Sarma, A., Aggarwal, G., and Tomkins, A.}
\newblock Online selection of diverse results.
\newblock In {\em Proceedings of the fifth ACM international conference on Web
  search and data mining\/} (2012), pp.~263--272.

\bibitem{seiden2002online}
{\sc Seiden, S.~S.}
\newblock On the online bin packing problem.
\newblock {\em Journal of the ACM (JACM) 49}, 5 (2002), 640--671.

\bibitem{steihaus1948problem}
{\sc Steihaus, H.}
\newblock The problem of fair division.
\newblock {\em Econometrica 16\/} (1948), 101--104.

\bibitem{tan2006semi}
{\sc Tan, Z., and Cao, S.}
\newblock Semi-online machine covering on two uniform machines with known total
  size.
\newblock {\em Computing 78}, 4 (2006), 369--378.

\bibitem{wu2014optimal}
{\sc Wu, Y., Cheng, T., and Ji, M.}
\newblock Optimal algorithms for semi-online machine covering on two
  hierarchical machines.
\newblock {\em Theoretical Computer Science 531\/} (2014), 37--46.

\bibitem{Zhou_2008}
{\sc Zhou, Y., Chakrabarty, D., and Lukose, R.}
\newblock Budget constrained bidding in keyword auctions and online knapsack
  problems.
\newblock In {\em International Workshop on Internet and Network Economics\/}
  (2008), Springer, pp.~566--576.

\end{thebibliography}

\eat{
\section*{Checklist}
\begin{enumerate}

\item For all authors...
\begin{enumerate}
  \item Do the main claims made in the abstract and introduction accurately reflect the paper's contributions and scope?
    \answerYes{}
  \item Did you describe the limitations of your work?
    \answerYes{}
  \item Did you discuss any potential negative societal impacts of your work?
    \answerNA{Our results here are purely theoretical.}
  \item Have you read the ethics review guidelines and ensured that your paper conforms to them?
    \answerYes{}
\end{enumerate}

\item If you are including theoretical results...
\begin{enumerate}
  \item Did you state the full set of assumptions of all theoretical results?
    % \answerTODO{}
    \answerYes{}
	\item Did you include complete proofs of all theoretical results?
	\answerYes{Some proofs are deferred to appendix due to space constraints.}
    % \answerTODO{}
\end{enumerate}

\item If you ran experiments...
\begin{enumerate}
  \item Did you include the code, data, and instructions needed to reproduce the main experimental results (either in the supplemental material or as a URL)?
    % \answerTODO{}
 \answerNA{}
  \item Did you specify all the training details (e.g., data splits, hyperparameters, how they were chosen)?
    \answerNA{}
	\item Did you report error bars (e.g., with respect to the random seed after running experiments multiple times)?
    \answerNA{}
	\item Did you include the total amount of compute and the type of resources used (e.g., type of GPUs, internal cluster, or cloud provider)?
    \answerNA{}
\end{enumerate}

\item If you are using existing assets (e.g., code, data, models) or curating/releasing new assets...
\begin{enumerate}
  \item If your work uses existing assets, did you cite the creators?
    \answerNA{}
  \item Did you mention the license of the assets?
    \answerNA{}
  \item Did you include any new assets either in the supplemental material or as a URL?
    \answerNA{}
  \item Did you discuss whether and how consent was obtained from people whose data you're using/curating?
    \answerNA{}
  \item Did you discuss whether the data you are using/curating contains personally identifiable information or offensive content?
    \answerNA{}
\end{enumerate}

\item If you used crowdsourcing or conducted research with human subjects...
\begin{enumerate}
  \item Did you include the full text of instructions given to participants and screenshots, if applicable?
    \answerNA{}
  \item Did you describe any potential participant risks, with links to Institutional Review Board (IRB) approvals, if applicable?
    \answerNA{}
  \item Did you include the estimated hourly wage paid to participants and the total amount spent on participant compensation?
    \answerNA{}
\end{enumerate}
\end{enumerate}
}

\clearpage
\section*{Appendix}
\appendix
\section{Omitted Proofs} \label{sec:omitted}\subsection{Online Santa Claus with Adversarial Arrival Order}\label{sec:adversarial}% First, we prove \Cref{thm:ad-upper}.
\begin{proof}[Proof of \Cref{thm:ad-upper}]
    We consider the following instance: of the $n$ agents, $n-1$ are \emph{private} agents each having $k$ private items that have valuation $1$ for the corresponding private agent and $0$ for every other agent. The $n$-th agent is a \emph{public} agent and there are $k$ public items that have a valuation of $1$ for all the $n$ agents. % A graphical depiction of these two item types is presented in Figure \ref{fig:item_types}.
    
    The optimal solution assigns the private items to the corresponding private agents and the public items to the public agent. Thus, $\opt = k$. In the online instance, the adversary chooses to present all the public items before the private items. Since all the agents look identical before the arrival of the first private item, the public agent gets no more than $k/n$ items in expectation for any algorithm (this adversarial input is comparable to our toy example discussed in Figure \ref{fig:hard_ex}). Since none of the remaining private items can be allocated to the one public agent, their bundle value can no longer increase beyond $k/n$. The theorem follows.
\end{proof}

\begin{theorem} \label{thm:ad-algo}
    In the adversarial setting, for any $\e \in (0, 1)$, there is an algorithm for the online Santa Claus problem that has a competitive ratio of $(1-\e)\cdot\frac1n$ for $\opt > C\cdot \frac{n\ln n}{\e^2}$ for a large enough constant $C$.
\end{theorem}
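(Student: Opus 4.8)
The plan is to design an algorithm that, essentially, simulates on the hard instance of Theorem~\ref{thm:ad-upper} what we already know works in the random-order setting, but now deliberately randomizing the one decision the adversary can exploit. Concretely: for each arriving item with value vector $\bv^t$, let $S_t = \{i : v^t_i > 0\}$ be its support; the algorithm picks an agent uniformly at random from $S_t$ (or from among a carefully chosen subset — see below) and assigns the whole item there. The intuition is that the $1/n$ loss is unavoidable exactly because the algorithm cannot tell the public agent apart from a private agent with a not-yet-seen item; by spreading each item uniformly over its support, every agent accrues, in expectation, a $1/|S_t| \ge 1/n$ fraction of each item it could possibly receive in \opt{}, so $\Ex[\text{value of agent } i] \ge \frac1n \sum_{t} v^t_i \overline{x}^t_i \ge \frac1n \cdot \opt$ coordinate-wise in expectation. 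This is where the hypothesis $v \in [0,1]^n$ and the additive slack come in.

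The key steps, in order: (1) Define \textsc{Random} as above and observe that for each agent $i$, the random variable $W_i := \sum_{t} v^t_i x^t_i$ (the value $i$ ends up with) is a sum of independent contributions $v^t_i x^t_i \in [0,1]$ across items $t$ (independence across items holds because each item's assignment is an independent uniform draw), with $\Ex[W_i] \ge \frac1n\sum_t v^t_i \overline{x}^t_i \ge \frac{\opt}{n}$. (2) Apply a Chernoff/Hoeffding bound: $\Pr[W_i < (1-\e)\Ex[W_i]] \le \exp(-\e^2 \Ex[W_i]/3) \le \exp(-\e^2 \opt/(3n))$. (3) Choose the constant $C$ so that $\opt > C\frac{n\ln n}{\e^2}$ forces this to be at most $1/n^2$; then union-bound over the $n$ agents to get that with probability $\ge 1-1/n$, \emph{every} agent simultaneously has $W_i \ge (1-\e)\frac{\opt}{n}$, hence $\min_i W_i \ge (1-\e)\frac{\opt}{n}$. (4) Convert the high-probability statement into the claimed competitive ratio (possibly folding the failure event into the additive regret $O(n\log n/\e^3)$ as stated in Table~\ref{tab:one}, or just noting the expectation bound).

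The main obstacle is step (1): the naive "uniform over the support" rule does not actually give $\Ex[v^t_i x^t_i] \ge \frac1n v^t_i \overline{x}^t_i$ in general, because $|S_t|$ could be as large as $n$ while \opt{} only uses a few coordinates, and — more subtly — an item that \opt{} does \emph{not} give to $i$ at all still gets charged to $i$ with positive probability, which is fine, but an item \opt{} \emph{does} give to $i$ is split only $1/|S_t|$ ways, and if many high-support items pile onto one agent in \opt{} the bound is exactly $\frac1n\opt$ with no room to spare — so the Chernoff concentration in step (2) is doing real work and the large-\opt{} hypothesis is genuinely needed. The cleanest fix, which I would adopt, is to have the algorithm assign item $t$ to each agent $i$ with probability exactly $\overline{x}^t_i / (\text{number of agents } \opt \text{ could use})$... but since the algorithm doesn't know \opt, one instead argues distributionally: fix \opt{} in hindsight, and for the analysis it suffices that $\Pr[x^t_i = 1] \ge \frac1n$ whenever $\overline{x}^t_i = 1$, which uniform-over-support does guarantee since $i \in S_t$ whenever $\overline{x}^t_i = 1$. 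So the bound $\Ex[W_i] \ge \frac1n\sum_t \overline{x}^t_i v^t_i = \frac{\opt}{n}$ does hold, and the rest is concentration plus union bound; I expect steps (2)–(4) to be routine once (1) is set up with this observation.
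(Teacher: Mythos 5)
Your proposal is correct and matches the paper's approach: assign each item to a uniformly random agent, note that each agent's expected value is at least $\opt/n$, apply a Chernoff bound, and union-bound over the $n$ agents. The only difference is that the paper's algorithm picks uniformly over \emph{all} $n$ agents rather than over the item's support $S_t$, which gives the same $\ge\opt/n$ expectation bound but avoids the support-related complication you work through in step (1).
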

\begin{proof}
    The algorithm is to simply assign every item uniformly at random among all the $n$ agents. Note that for any fixed agent, its expected value is at least $\frac1n \cdot \opt$. By Chernoff bounds, the probability that its total value is less than $(1-\frac\e2)\frac1n \cdot \opt$ is given by 
    $\exp(-\frac12\cdot \frac{\e^2}{4}\cdot \frac1n \cdot \opt) < \frac{\e}{2n}$ for $\opt > C\cdot \frac{n \ln n}{\e^2}$ for a sufficiently large constant $C$. By union bound over all the $n$ agents, the probability that {\em any} agent's overall value is less than $(1-\frac\e2)\frac1n\cdot\opt$ is at most $\frac\e2$. Thus, the expected competitive ratio is at least $(1-\frac\e2)(1-\frac\e2)\cdot \frac1n > (1-\e)\cdot \frac1n$.
\end{proof}

\subsection{Online Algorithm Lemmas}
\begin{proof}[Proof of Lemma \ref{lem:props}]
Property $(a)$ is an established property of the \textsc{LogSumExp} function \cite{blanchard2021accurately,nielsen2016guaranteed} but we present the proof here for completeness. We start by exponentiating the input, summing over all elements and applying the logarithm to the resultant bounds.
\begin{align*}
    &\exp{ \max_i \{-\ve u_i \}} \leq \sum_{i=1}^n \exp{-\ve u_i} < n \exp{\max_i \{-\ve u_i \}} \\
    &\overset{(I)}{\iff} \max_i \{- \ve u_i \} \leq \ln \sum_{i=1}^n \exp{-\ve u_i} < \max_i \{- \ve u_i \} + \ln n \\
    &\overset{(II)}{\iff} -\max_i \{- \ve u_i \} > -\ln \sum_{i=1}^n \exp{-\ve u_i} \geq -\max_i \{- \ve u_i \} - \ln n \\
    &\overset{(III)}{\iff} \min_i \{\ve u_i \} > -\ln \sum_{i=1}^n \exp{-\ve u_i} \geq \min_i \{\ve u_i \} - \ln n
\end{align*}
Where $(\textsc{i})$ is the result of taking the logarithm, $(\textsc{ii})$ is a negation on the inequalities and $(\textsc{iii})$ is by property of the maximum. Now, since $\ve > 0$, the result follows from simple algebraic manipulation.
\begin{align*}
    &\min_i \{\ve u_i \} > -\ln \sum_{i=1}^n \exp{-\ve u_i} \geq \min_i \{\ve u_i \} - \ln n \\
    &\overset{(IV)}{\iff} \ve \min_i \{u_i \} > -\ln \sum_{i=1}^n \exp{-\ve u_i} \geq \ve \min_i \{u_i \} - \ln n
\end{align*}
\begin{align*}
    &\overset{(V)}{\iff} \min_i \{u_i \} > \frac{-1}{\ve} \ln \sum_{i=1}^n \exp{-\ve u_i} \geq \min_i \{u_i \} - \frac{\ln n}{\ve} \\
    &\overset{(VI)}{\iff}\min_i \{u_i \} > \phi_\ve(u) \geq \min_i \{u_i \} - \frac{\ln n}{\ve}
\end{align*}
where $(\textsc{iv})$ follows from positive scalar multiplication within a minimum, $(\textsc{v})$ by dividing through by $\ve$ and $(\textsc{vi})$ is merely the definition of our smoothing function $\phi_\ve$. This verifies the desired property. 

To prove $(b)$, we first calculate the partial derivative of the smoothed minimum function 
% \debmalya{$\frac{\partial}{\partial x_i}$ or $\frac{\partial}{\partial u_i}$?}
$$\frac{\partial}{\partial x_i} \phi_\ve (u) = \frac{e^{-\ve u_i}}{\sum_{j=1}^n e^{-\ve u_j}}$$
and now, using $u_i \geq 0$ and $v_i \in [0,1]$ we derive
% \debmalya{Again, $\frac{\partial}{\partial x_i}$ or $\frac{\partial}{\partial u_i}$?}
\begin{gather*}
    \frac{e^{-\ve}e^{-\ve u_i}}{\sum_{j=1}^n e^{-\ve u_j}} < \frac{e^{-\ve (u_i + v_i)}}{\sum_{j=1}^n e^{-\ve (u_j + v_j)}} <  \frac{e^{-\ve u_i}}{e^{-\ve} \sum_{j=1}^n e^{-\ve u_j}}\\
    e^{-\ve}\frac{\partial}{\partial x_i} \phi_\ve(u) < \frac{\partial}{\partial x_i} \phi_\ve(u+v)  < e^{\ve} \frac{\partial}{\partial x_i} \phi_\ve(u) 
\end{gather*}
Therefore, we have property $(b)$.

Lastly, to prove $(c)$ we first invoke the definition $\phi_\ve$:
$$\frac{-1}{\varepsilon} \ln \left( \sum_{i=1}^n e^{-\varepsilon(x_i - y_i)} \right) \leq \frac{-1}{\varepsilon} \left( \ln \left( \sum_{i=1}^n e^{-\varepsilon x_i} \right) - \ln \left( \sum_{i=1}^n e^{-\varepsilon y_i} \right) \right).$$
This statement is equivalent to
$$\ln \left( \sum_{i=1}^n e^{-\varepsilon(x_i - y_i)} \right) \geq \ln \left( \sum_{i=1}^n e^{-\varepsilon x_i} \right) - \ln \left( \sum_{i=1}^n e^{-\varepsilon y_i} \right)$$
which, by exponentiating both sides, yields
$$\sum_{i=1}^n e^{-\varepsilon(x_i - y_i)} \geq \left(\sum_{i=1}^n e^{-\varepsilon x_i} \right) \cdot \left( \sum_{i=1}^n e^{-\varepsilon y_i} \right)^{-1} \Rightarrow \sum_{i=1}^n e^{-\varepsilon(x_i - y_i)} \cdot \left( \sum_{i=1}^n e^{-\varepsilon y_i} \right) \geq \left(\sum_{i=1}^n e^{-\varepsilon x_i} \right)$$
and expansion of the left-hand side verifies the claim.
\end{proof}

\begin{proof}[Proof of Lemma \ref{lem:grad_bound}] By direct integration and the stability property $(b)$, we see
\begin{align*}
    \phi_\ve (u+v) &= \phi_\ve (u) + \int_0^1 \langle \nabla \phi_\ve (u + \alpha v), v \rangle d\alpha \\
    &\in \phi_\ve(u) + e^{\pm \ve} \langle \nabla \phi_\ve (u), v\rangle
\end{align*}
Therefore, we can further show
\begin{align*}
    \langle \nabla \phi_\ve (u), v \rangle &\geq e^{-\ve} [\phi_\ve (u+v) - \phi(u)] \\
    &\geq e^{-\ve} [\phi_\ve (u + v') - \phi (u)] \\
    &\geq e^{-2\ve} \langle 
    \nabla \phi_\ve (u), v' \rangle
\end{align*}
where the first and last inequality is a direct result of Lemma \ref{lem:props}, and the second is from assumption on the inputs.
\end{proof}

    \begin{proof}[Proof of Lemma \ref{lem:hoeff}]
    Let $\mu = \frac{1}{m} \sum_t y^t$ be the average of the set of vectors and to simplify notation we additionally break apart the expectation and let %$\Ex\langle \bY^k, \bZ \rangle = \Ex \left[ \Ex \left[ \langle \bY^k, \bZ \rangle | \bY^1, ..., \bY^{k-1} \right] \right]$.
    $\Ex\langle \bY^k, \bZ \rangle = \Ex \Ex_{k-1}\langle \bY^k, \bZ \rangle$ where $\Ex_{k-1}$ denote the expectation conditioned on $\bY^1, ..., \bY^{k-1}$.
    Note that since $\bZ$ is a unit-vector in $\ell_1^+$, an innerproduct of this vector with $\bY^k$ is simply a weighted sum of the latter's elements. Thus, we have
    \begin{align} \label{ineq.inner_prod}
        \Ex_{k-1}\langle \bY^k, Z \rangle%&=  \langle \Ex\left[\bY^k | \bY^1, ..., \bY^{k-1} \right], Z \rangle \\
        \geq \min_i \{ \Ex \left[ \bY^k | \bY^1, ..., \bY^{k-1}\right] \}.
    \end{align}
    Additionally, by nature of the sampling set and the procedure of sampling without replacement, we have the conditional expectation
    $$\Ex\left[ \bY^k | \bY^1, ..., \bY^{k-1} \right] = \frac{m\mu - (\bY^1 + ... + \bY^{k-1})}{m - (k - 1)}$$ and further note that $m\mu - (\bY^1 + ... + \bY^{k-1})$ has the \emph{same} distribution as $\bY^1 + ... + \bY^{m-(k-1)}$. This concretely gives us the simplifying equivalences
    $$\Ex\left[ \bY^k | \bY^1, ..., \bY^{k-1} \right] = \frac{m\mu - (\bY^1 + ... + \bY^{k-1})}{m-(k-1)} = \frac{\sum_{t=1}^{m-(k-1)} \bY^t}{m-(k-1)}.$$
    We now return to the inequality bound of (\ref{ineq.inner_prod}) and, using the above equivalences, obtain 
    \begin{align*}
        \Ex \left[ \min_i \left\{ \sum_{t=1}^{m-(k-1)} \bY_i^t \right\} \right] &\geq \Ex \left[ \phi_\ve \left(\Sigma_{t=1}^{m-(k-1)} \bY^t \right) \right] \tag{Lemma \ref{lem:props}}\\
        &= \Ex \left[\frac{-1}{\ve} \ln \left( \sum_i \exp{\left(-\ve \Sigma_{t=1}^{m-(k-1)} \bY_i^t\right)} \right) \right]\\
        &\geq \frac{-1}{\ve} \ln \left( \sum_i \exp{\left(-\ve \Ex \left[ \Sigma_{t=1}^{m-(k-1)} \bY_i^t \right]\right)} \right) \tag{Jensen's Ineq.} \\
        &\geq e^{-\ve} \min_i \left\{ \Ex [\Sigma_{t=1}^{m-(k-1)} \bY_i^t] \right\} - \frac{\ln n}{\ve} \tag{Lemma \ref{lem:props}} \\
        &\geq e^{-\ve} \min_i \left\{ (m-(k-1)) \Ex \bY_i^t \right\} - \frac{\ln n}{\ve}
    \end{align*}

    Finally combining the above results, we obtain
    \begin{align*}
        \frac{1}{m-(k-1)} \left(e^{-\ve} \min_i \{ (m-(k-1)) \Ex \bY^t \} - \frac{\ln n}{\ve} \right) &\leq \Ex \left[ \min_i \{\Ex \left[\bY^k_i | \bY^1, ..., \bY^{k-1} \right] \} \right] \\
        &\leq \Ex \langle \bY^t, \bZ \rangle
    \end{align*}
    Thus, after rearranging terms, this completes the lemma.
    \end{proof}

\end{document}